\definecolor{lightgrey}{RGB}{170, 170, 170}
\newcommand\listingsize{\fontsize{9pt}{10pt}}
\RecustomVerbatimCommand{\Verb}{Verb}{fontsize=\listingsize}
\crefname{lstlisting}{Listing}{Listings}
\definecolor{grey}{RGB}{130,130,130}
\newcommand{\dbpedia}{DBpedia\xspace}
\newcommand{\cwm}{{\itshape cwm}\xspace}
\newcommand{\http}{{HTTP}\xspace}
\newcommand{\nthree}{{N\oldstylenums 3}\xspace}
\newcommand{\owls}{\mbox{OWL-S}\xspace}
\newcommand{\rdf}{RDF\xspace}
\newcommand{\restdesc}{RESTdesc\xspace}
\newcommand{\uri}{URI\xspace}
\newcommand{\URL}{URL\xspace}
\newtheorem{theorem}{Theorem}
\newtheorem{definition}[theorem]{Definition}
\newtheorem{remark}[theorem]{Remark}
\newtheorem{lemma}[theorem]{Lemma}
\newtheorem{corollary}[theorem]{Corollary}
\begin{document}

\title[The Pragmatic Proof:    Hypermedia API Composition and Execution]{%
       The Pragmatic Proof:\\* Hypermedia API Composition and Execution
}
\author[Ruben Verborgh et al.]
{
  RUBEN VERBORGH, D\"ORTHE ARNDT, SOFIE VAN HOECKE\\
  Ghent University -- iMinds -- Multimedia Lab\\
  Sint-Pietersnieuwstraat 41,\\
  9000 Ghent, Belgium
  \and
  JOS DE ROO and GIOVANNI MELS\\
  Agfa Healthcare\\
  Moutstraat 100,\\
  9000 Ghent, Belgium
  \and
  THOMAS STEINER and JOAQUIM GABARRO\\
  Universitat Polit\`ecnica de Catalunya\\
  Department {\scriptsize LSI}\\
  08034 Barcelona, Spain
}

\maketitle

\begin{abstract}
Machine clients are increasingly making use of the Web to perform tasks.
While Web services traditionally mimic remote procedure calling interfaces,
a~new generation of so-called hypermedia APIs works through hyperlinks and forms,
in a~way similar to how people browse the Web.
This means that existing composition techniques,
which determine a~procedural plan upfront,
are not sufficient to consume hypermedia APIs,
which need to be navigated at runtime.
Clients instead need a~more dynamic plan
that allows them to follow hyperlinks and use forms with a~preset goal.
Therefore, in this article,
we show how compositions of hypermedia~APIs can be created by generic Semantic Web reasoners.
This is achieved through the generation of a~proof
based on semantic descriptions of the APIs' functionality.
To pragmatically verify the applicability of compositions,
we introduce the notion of \mbox{pre-execution} and \mbox{post-execution proofs}.
The runtime interaction between a~client and a~server is guided by proofs
but driven by hypermedia,
allowing the client to react to the application's actual state indicated by the server's response.
We~describe how to generate compositions from descriptions,
discuss a~computer-assisted process to generate descriptions,
and verify reasoner performance on various composition tasks
using a~benchmark suite.
The experimental results lead to the conclusion that proof-based consumption of hypermedia APIs
is a~feasible strategy at Web scale.
\\

Under consideration in Theory and Practice of
Logic Programming (TPLP).
\end{abstract}

\begin{keywords}
  composition, proof, reasoning, Semantic Web, hypermedia APIs, Web APIs
\end{keywords}

\maketitle

\clearpage

\section{Introduction}

\subsection{Hard-coded API contracts on a~hypermedia-driven Web}
The World Wide Web,
on which millions of servers together offer billions of information resources,
has been designed as a~distributed hypermedia application~\cite{bernerslee_1992}.
\enquote{Hypermedia} means that pieces of information can be connected to each other;
therefore, consuming information
does not require any knowledge of servers' internal information structures.
Instead, users of the Web follow \emph{hyperlinks} and fill out \emph{forms}
to move from one piece of information to the next.
This architectural decision has been essential for the global growth of the Web:
people can browse websites by clicking links,
regardless of whether they have used them before.
Fielding called this principle
\emph{\enquote{hypermedia as the engine of application state}} \cite{REST},
because servers send hypermedia documents,
which clients use to advance the state of the interaction.
Rather than relying on a~pre-determined set of actions
that would have been communicated through a~separate channel,
such hypermedia documents let clients select steps just-in-time.
This ensures the Web's temporal scalability:
if a~server decides to change its interface,
clients do not have to be reprogrammed---%
they simply receive hypermedia documents with different~links.

As more and more people found their way to the Web,
it seemed evident that automated clients would also start using the Web autonomously.
The immediate barrier was that all resources on the Web
were only available in human languages,
which machines cannot accurately interpret yet.
A~logical step for programmers,
who deal with Application Programming Interfaces (APIs) in software development,
was to retrofit a~system for API operations
to the Web's \http interface~\cite{HTTP}.
Instead of navigating links without prior knowledge,
machines executed a~pre-defined list of commands
that they translated into \http requests.
As such, those APIs follow a~proprietary Remote Procedure Calling (RPC) protocol
\emph{on~top~of} \http,
which---despite the label \enquote{Web~services} or \enquote{Web~{APIs}}---%
has consequently few in common with the~Web.

As expected, such proprietary APIs with a~fixed contract
cannot withstand evolution very well.
If the server changes the API,
clients have to be reprogrammed.
While incompatible API changes are uncommon in closed environments,
servers on the Web are in constant evolution,
as witnessed by the short lifespan of websites and APIs.
The hypermedia mechanism that guides clients through the Web---%
and thus allows them to cope with changes---%
is notably absent from RPC APIs.

\subsection{Hypermedia APIs as native Web citizens}
\label{sec:RESTConstraints}
A~decade after the invention of the Web,
Fielding analyzed the architectural principles that contributed to its world-wide growth,
which he captured in the Representational State Transfer (REST) architectural style~\cite{REST}.
Unlike the RPC APIs discussed above,
APIs that follow the REST principles are native Web citizens,
and thus more resilient to change~\cite{verborgh_jod_2014}.
The distinguishing characteristic of REST is its \emph{uniform interface},
consisting of four constraints~\cite{REST}:

\begin{description}
  \item[Identification of resources]
  The essential unit of information in REST architectures is a~\emph{resource},
  a~conceptual entity that must be \emph{uniquely identifiable}.
  On the Web, this means that each resource must have its own URL.
  For example, \verb!http://example.org/weather/london/2014/05/01!
  could identify the weather in London on a~particular day.

  \item[Resource manipulation through representations]
  Resources are conceptual and cannot be transfered;
  clients and servers instead exchange \emph{representations}.
  Depending on the client's capabilities,
  client and server agree on one of multiple possible representations of each resource
  (e.g., HTML for humans, JSON or RDF for machines).
  For example, the weather in London with the above URL
  would be represented as a~JSON document
  when requested by a~JavaScript application.

  \item[Self-descriptive messages]
  Rather than defining custom actions,
  as is the case with APIs in typical programming languages,
  REST APIs use a~limited set of commands defined by a~protocol.
  For example, the Web uses HTTP's \verb!GET! and \verb!PUT!,
  which have a universal meaning,
  rather than \verb!showWeather! or \verb!setWeather!.

  \item[Hypermedia as the engine of application state]
  Also known as the \emph{hypermedia constraint},
  this principle indicates that the client should be able
  to perform the interaction with the server solely through hypermedia.
  On the Web, this happens through \emph{hypermedia controls}
  such as hyperlinks and forms.
  Another perspective on this is that all communication should happen \emph{in-band}
  instead of \emph{out-of-band}:
  clients engage in an interaction through hypermedia representations of resources
  rather than through pre-defined contracts.
  REST APIs are thus \emph{hypermedia-driven}~\cite{fielding_2008}.
  For example, tomorrow's weather is linked from today's weather,
  rather than having to craft a~new weather request by hand.
\end{description}

Unfortunately, many APIs mistakenly label themselves as \enquote{REST},
giving a~rather unclear meaning to the term \enquote{REST API};
in particular, the fourth constraint is often missing.
As a~result, the term \enquote{\emph{hypermedia API}} is used
to distinguish those APIs that follow all REST constraints~\cite{RESTfulWebApis},
and thus inherit their architectural benefits.

An important consequence of the REST architectural constraints
is that there is no observable difference
between \emph{websites} and \emph{hypermedia APIs}.
The term \enquote{website} is commonly applied when the consumers are humans,
and \enquote{hypermedia~API} is used for machine clients.
However, they only differ in the kind of representations they offer:
human-readable or machine-readable.
By offering multiple representations,
a~\emph{single} hypermedia API/website can serve
both humans and machine clients,
as opposed to the two separated interfaces we commonly see~\cite{verborgh_jod_2014}.

\subsection{Generic clients of hypermedia APIs}\label{API_comp}
Once APIs have been made accessible for machines,
the question becomes:
how can we create clients that perform tasks using such APIs?
And more specifically,
can we build generic clients that do not have to be preprogrammed
for a~specific task?
For example, given an API for shipping packages,
an API for user profiles, and an API of online bookstore,
how can we deliver a~book on a~user's wish~list to their doorstep?
Indeed, in addition to interpreting an API's responses,
clients should be able to reason about an APIs' functionality
and how it can be combined with other APIs to perform a~complex task.
This question has been the subject of much literature
for classical Web services,
where the APIs were composed into a~static plan
that had to be executed step-by-step.
Such a~plan, and the mechanism to generate it, would only be valid
as long as all involved APIs did not undergo any changes,
which is a~rather unrealistic assumption on the Web.

The situation is fundamentally different for hypermedia APIs:
because of the hypermedia constraint,
no such static plan can (nor should) be created beforehand,
because hypermedia APIs can only be consumed by following hypermedia controls.
On the other hand, clients should somehow know
what links they must follow,
because only certain links will lead them
to successful completion of the desired~goal.
This gives rise to an approach where we need a~high-level plan
that guides the runtime, hypermedia-driven execution of Web APIs.

In this article, we introduce and formalize a~proof-based method
to produce such high-level plans,
and detail their execution through hypermedia.
In the next section,
we describe related work in the domain of Web services and Web APIs,
followed by a~justification of the chose technologies in \cref{sec:Technologies}.
We introduce to hypermedia API description in \cref{sec:RESTdesc}.
\cref{sec:Proof} explains the use of proofs
to validate hypermedia API compositions,
the generation of which is detailed in \cref{sec:Composition}.
We explain how descriptions can be created in a~computer-assisted process
in \cref{sec:Generation}.
The feasibility of the approach is evaluated in \cref{sec:Evaluation}.
Finally, we end the article in \cref{sec:Conclusion} with conclusions
and an outlook on future~work.

\section{Related Work}
\label{sec:RelatedWork}

\subsection{Web APIs}
Let us start a~discussion on related work by defining terms about Web APIs.

\begin{description}\label{api}
  \item[A~Web server] uses the HTTP~protocol
    to offer data and/or actions,
    which are often exclusively located on this server.
    In other words, the server performs a~data lookup or computation
    that another device cannot or does not.
  \item[A~Web client] consumes resources offered by a~Web~server.
  \item[A~Web service or Web API] is a~machine-accessible interface
    to data and/or actions offered by a~server through HTTP.
    Traditionally, \enquote{Web service} is used for RPC-based XML interfaces,
    whereas \enquote{Web API} is used for more lightweight APIs,
    such as those based on JSON.
  \item[A~Web API operation]
    is a~single HTTP request from a~Web client to a~Web API.
    Interactions between a~client and a~server consist of one or more operations.
  \item[A~hypermedia control] is a~hyperlink or form
    that allows clients to navigate from one resource to another.
  \item[A~hypermedia API] is a~Web API
    that follows all REST architectural constraints
    as discussed in \cref{sec:RESTConstraints}.
    In particular, it is designed such that its (human or machine) clients
    should perform the interaction through hypermedia controls.
\end{description}

A~characteristic of hypermedia APIs is that they,
in contrast to RPC~APIs,
are not action-based but resource-based.
As such, the border between data and actions is blurred;
they are effectively modeled as one and the same.
For example, an RPC~API might offer access to image resizing functionality
by adding a~dedicated action named \verb!resizeImage!.
A~hypermedia API would instead expose a~resource for the original image
such as \verb!/images/381/!,
which would allow access to resized images
through a~links towards the resource \verb!/images/381/thumbnail/!.
This structuring around resources
considerably simplifies the planning process,
as actions do not have to be considered as separate entities.
Consequently, as we will see in \cref{sec:RESTdesc},
rules and data need not be treated differently.

Integrating hypermedia APIs into an application nowadays requires manual development work,
such as writing the \http request templates
and parsing the returned \http responses.
Instructions on how to write this code can often be found
on the API's website in the form of human-readable API documentation.
In order to automate this process, \emph{machine-readable documentation} is necessary.
On the lowest level, this documentation describes the message format and modalities.
On a~higher level, it explains the specific functionality offered by the hypermedia API,
so a~machine can autonomously decide whether the API is appropriate for a~certain use case.

\subsection{The Semantic Web}
To create machine-readable information in general,
we can use standards from the Semantic Web~\cite{SemanticWeb},
which is a~vision that enables autonomous clients to use the Web.
A~central standard is \rdf~\cite{RDF},
which is a~machine-interpretable language consisting of data triples
$(\textit{subject}, \textit{predicate}, \textit{object})$.
For instance, the following triple describes the relationship
between Leonhard Euler and Daniel Bernoulli:
\begin{Verbatim}
  <http://dbpedia.org/resource/Leonhard_Euler>
      <http://xmlns.com/foaf/0.1/knows>
          <http://dbpedia.org/resource/Daniel_Bernoulli>.
\end{Verbatim}
Various RDF syntaxes exist;
some of them, such as Notation3 (\nthree,~\citeNP{Notation3})
extend the RDF model with features such as variables and quantification.
The proof-based algorithm in this paper
uses the \nthree reasoner EYE~\cite{eyepaper} to generate plans.
We chose N3 over other formalisms such as Prolog~\cite{Prolog}
or Datalog~\cite{datalog}
because RDF---and thus N3---is native to the Web.
This is exemplified in the triple above:
the predicate that relates Euler to Bernoulli
can be seen as a~(typed) hyperlink from the first URL to the other.
Thereby, the hyperlink concept that is crucial for the Web and hypermedia APIs,
can be represented in the most straightforward way in RDF/N3.
Furthermore, the presence of RDF means that we can reuse ontological constructs
from well-known vocabularies such as RDFS~\cite{RDFS} or OWL~\cite{OWL}.
For example, the domain and range of the \verb!foaf:knows! predicate
are \verb!foaf:Person!.
Hence, using the triple above and this ontological knowledge,
additional triples can be derived,
expressing that Euler and Bernoulli are instances of \verb!foaf:Person!.
Such derived knowledge could then be used
for hypermedia API operations
that require \verb!foaf:Person! instances as input.

\subsection{Web Service Description}
\label{subsec:WebServiceDescription}
Machine-readable documentation of Web services
has been a~topic of intense research
for at least a~decade.
There are many approaches to service description
with different underlying service models.
\owls~\cite{OWLS} and WSMO~\cite{WSMO} are the most well-known
Semantic Web Service description paradigms.
They both allow to describe the high-level semantics of services
whose message format is WSDL~\cite{WSDL}.
Though extension to other message formats is possible,
this is rarely seen in practice.
Semantic Annotations for WSDL~(SAWSDL, \citeNP{SAWSDL})
aim to provide a~more lightweight approach for bringing semantics to WSDL services.
Composition of Semantic Web services has been well documented,
but all approaches focus on RPC interactions and require specific software~\cite{CurrentComposition}.
In contrast, the proposed approach works for REST interactions and exclusively
relies on \emph{generic} Semantic Web reasoners.
While automated approaches to create descriptions are being researched~\cite{Leandro},
they are not the focus of this paper.

\subsection{Web API Description}
\label{subsec:WebAPIDescription}
In recent years, several description formats
for the more lightweight Web APIs have emerged~\cite{verborgh_rest_2014}.
Linked Open Services~(LOS, \citeNP{LOS}) expose functionality on the Web
using Linked Data technologies, such as HTTP and RDF.
Input and output parameters are described with graph patterns embedded inside RDF string literals to achieve quantification, which RDF does not support natively.
Linked Data Services~(LIDS, \citeNP{LIDS}) define interface conventions supported by a~lightweight model.
None of these methods, however, use the hypermedia principles of REST.
Several methods aim to enhance existing technologies
to deliver annotations of Web APIs.
HTML for RESTful Services~(hRESTS, \citeNP{hRESTS})
is a~microformats extension
to annotate HTML descriptions of Web APIs in a~machine-processable way.
SA-REST~\cite{SAREST} provides an extension of hRESTS
that describes other facets such as data formats and programming language bindings.
MicroWSMO~\cite{MicroWSMO,Maleshkova2009},
an extension to SAWSDL that enables the annotation of RESTful services,
supports the discovery, composition, and invocation of Web APIs,
but requires additional software.
Data-Fu~\cite{DataFu} uses rules to describe client-server interactions;
however, these rules are tightly bound to a~server's information structure
and are equivalent to a~fixed declarative program in RPC style.
This paper instead introduces a~flexible, hypermedia-driven technique
for the REST architectural style.

\subsection{Hypermedia API Description}
The description of hypermedia APIs is a relatively new field.
Hydra~\cite{HydraVocabulary} is a~vocabulary to support API descriptions,
but does not directly support automated composition.
\restdesc~\cite{verborgh_wsrest_2012}
is a~description format
for hypermedia APIs that describes them in terms of resources and~links.
\restdesc is expressed in \nthree
and will be used as a~description format in this paper,
so it is discussed further in \cref{sec:RESTdesc}.
The Resource Linking Language (ReLL, \citeNP{ReLL}) features media types, resource types, and link types as first-class citizens for descriptions.
The authors of ReLL also propose a~method
for ReLL API composition~\cite{ReLLComposition}
using Petri nets to describe the machine-client navigation.
However, in contrast to \restdesc, it does not support
automatic, functionality-based composition.

\subsection{Semantic Web Reasoning}
\label{subsec:RelatedReasoning}
The Pellet reasoner~\cite{Pellet} and the various reasoners of the Jena framework~\cite{Jena}
are the most commonly known examples of publicly available Semantic Web reasoners.
Pellet is a~reasoner on ontological constructs~\cite{OWL},
while Jena offers various ontological and rule-based reasoners.
The rule reasoner is the most flexible,
as it allows to incorporate custom derivations,
but it uses a~rule language that is specific to Jena and therefore not interchangeable.

Another category of reasoners uses \nthree,
leveraging the language's support of formulas and quantification for RDF
to provide a~logical framework for inferencing~\cite{N3Logic}.
The first \nthree reasoner was the forward-chaining \cwm~\cite{cwm},
which is a~general-purpose data processing tool for \rdf,
including tasks such as querying and proof-checking.
Another important \nthree reasoner is EYE~\cite{eyepaper},
whose features include backward-chaining and high performance.
A~useful capability of both \nthree reasoners
is their ability to generate and exchange \emph{proofs},
which can be used for software synthesis or API composition~\cite{Manna,Waldinger}.

\section{Justification of Chosen Technologies}
\label{sec:Technologies}
In this section, we explain and justify the technological choices made in this article.
First, we detail our choice for REST Web APIs, emphasizing the fundamental differences with RPC~APIs.
Second, we argue why we choose Notation3 instead of other logic programming languages.

\subsection{REST Web APIs}
\label{subsec:TechnologiesREST}
With the exception of RESTdesc,
existing techniques for Web service and Web API description
(discussed in \cref{subsec:WebServiceDescription,subsec:WebAPIDescription})
focus exclusively on interactions that follow the RPC model.
Algorithms that create compositions of such services or APIs
will essentially produce a~list of calls that have to be issued.
This process is visualized in \cref{fig:RPC}.
Two things are of particular interest in this diagram:
\begin{itemize}
  \item While subsequent calls might use output from earlier tasks
        (and might even be conditional based on such output),
        the type of calls made is not influenced by the RPC~API.
        The \emph{control flow} is dictated by the composition,
        based on descriptions of each call.
        The API itself does not provide any information
        about which next steps can be taken
        and how these should be performed.
        The role of descriptions is thus three-fold:
        a)~explaining which order of calls are possible;
        b)~describing how to perform calls;
        c)~expressing the functionality of~calls.
  \item Consequently, all control information is in the descriptions and composition;
        the server does not send any control information.
        The client thus fully relies on the descriptions and composition for control information.
\end{itemize}

We conclude that, apart from their label, Web services or APIs that follow the RPC communication style
have only a~minor connection to the Web.
While technically, they tunnel their calls over the HTTP protocol,
nothing in their principled workings is tied to the Web's core principles,
which include hypermedia documents and~hyperlinks.

\begin{figure}[t]
  \begin{subfigure}{0.46\linewidth}
    \includegraphics[width=\linewidth]{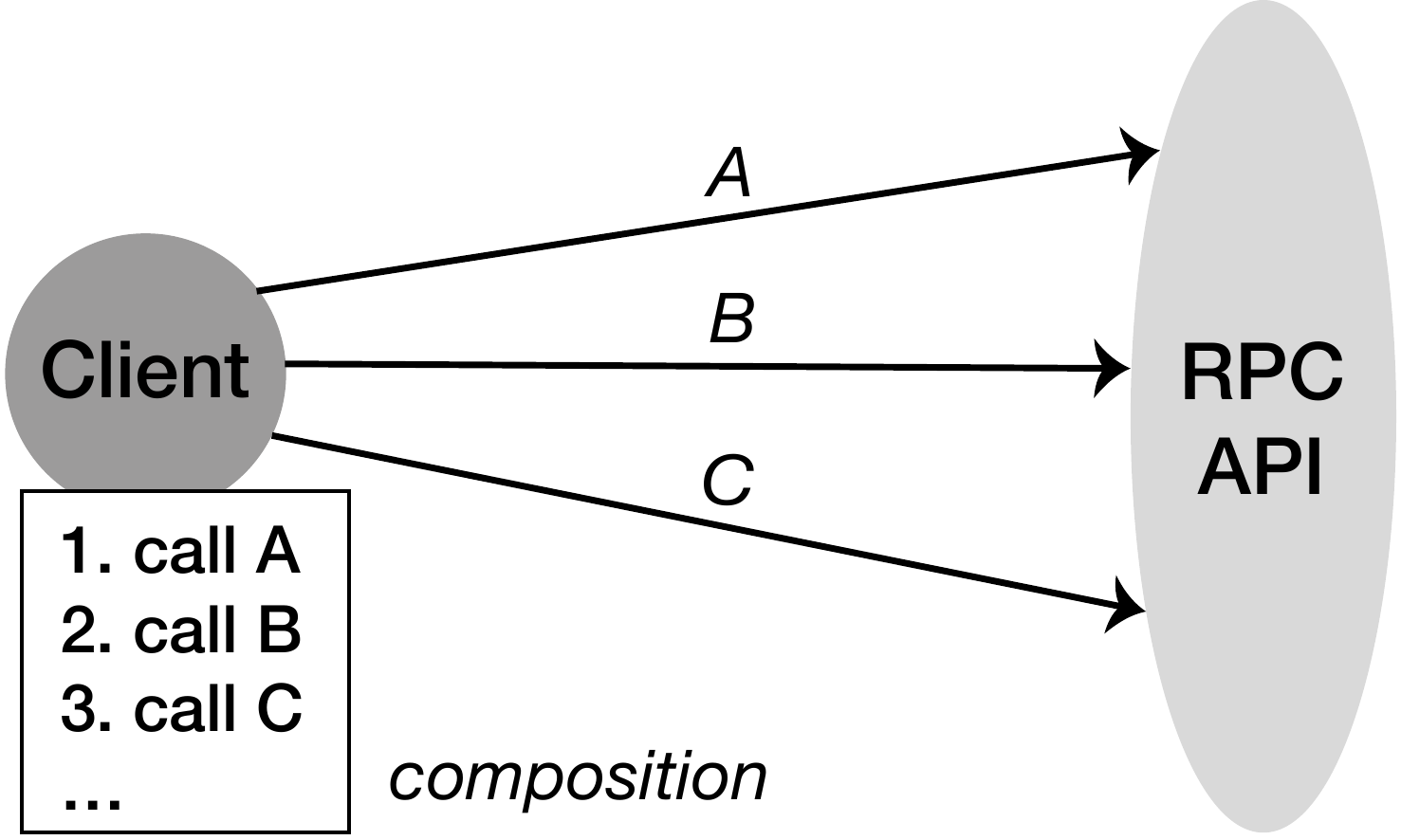}
    \caption{RPC composition and interaction}
    \label{fig:RPC}
  \end{subfigure}
  \hfill
  \begin{subfigure}{0.46\linewidth}
    \includegraphics[width=\linewidth]{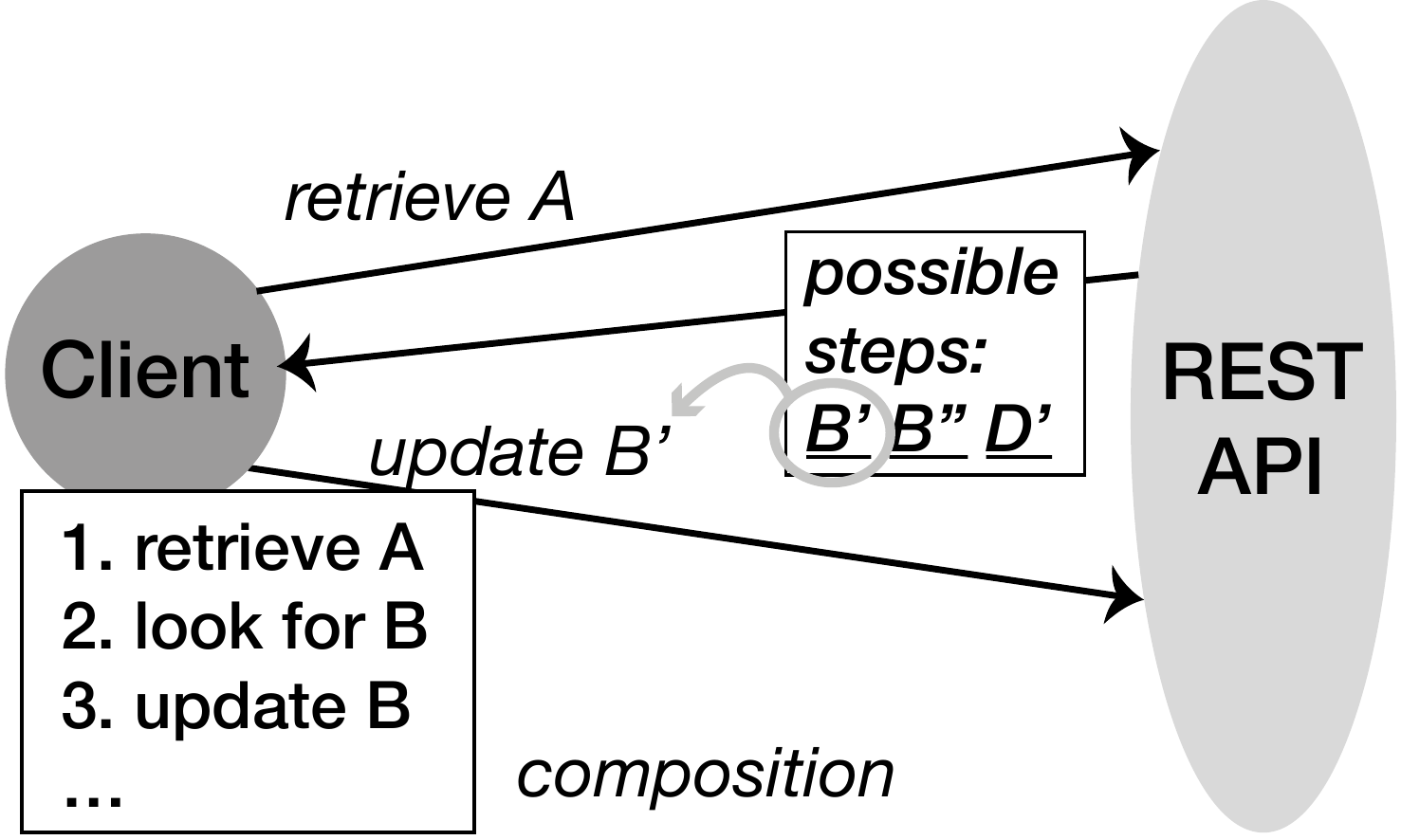}
    \caption{REST composition and interaction}
    \label{fig:REST}
  \end{subfigure}
  \caption{Interaction differences between clients of RPC and REST APIs}
  \label{fig:RPCvsREST}
\end{figure}

In contrast, an API that follows the REST architectural style
provides (information) \emph{resources} rather than (procedural) \emph{calls} as its interface.
It replies to a~request for a~resource with a~\emph{hypermedia representation},
which contains hyperlinks and forms that provide access to possible next steps.
\Cref{fig:REST} illustrates this principle:
a~client first requests a~resource~$A$,
which results in a~hypermedia document.
This documents contains a~link to~$B'$,
which is a~concrete instantiation of the~$B$
the client should look for according to the composition.
Using the control information in the hypermedia document,
the client then performs an action on~$B'$.
This indicates two major differences compared to~RPC:
\begin{itemize}
  \item The REST control flow is not fully dictated by the composition.
        The composition only provides a~high-level plan of what needs to happen,
        whereas concrete actions are performed through hypermedia controls.
        Descriptions have only one function,
        namely expressing the functionality provided by a~certain resource type.
        The access order or an explanation of how the call is executed
        are determined at runtime based on the REST API's responses.
  \item As such, the client needs to derive a~control flow at runtime
        by combining high-level steps from a~composition
        with concrete actions from hypermedia documents obtained from the server.
\end{itemize}

Note how REST APIs do have a~strong connection to the Web's core principles,
as clients use hypermedia documents and hyperlinks to perform tasks.
Unlike the RPC style, this is similar to human consumption of the Web:
when we want to perform a~certain task, we also have a~high-level plan in mind
(e.g., ordering a~package involves entering a~delivery address)
that becomes concrete through hypermedia controls
(e.g., using a~form to submit that delivery address).
This allows for a~much looser coupling~\cite{WebLooselyCoupled},
as descriptions and compositions do not need to contain interaction details.

Furthermore, because of the emphasis on links,
REST APIs are closely related to RDF.
If an API outputs in an RDF-based format (such as Turtle, JSON-LD, or N3),
the client obtains a~list of triples,
which are essentially typed links (if we interpret the property as a~relation specifier).
Following these links to realize a~concrete action
is an example of hypermedia-driven Web API consumption,
which is what we will outline in this article.
We stress that this is not possible with existing descriptions and algorithms
for RPC Web services and Web APIs,
as they cannot incorporate runtime control information originating from the API.

\subsection{Notation3}
\label{subsec:TechnologiesN3}
In this article, we use the Notation3 (\nthree) language and associated logic framework
to generate and execute compositions of REST Web APIs.
We justify this choice on both a~practical level
and a~theoretical level, as detailed in the subsections below.

\subsubsection{Practical Arguments}
The Linked Data Cloud contains billions of triples in the RDF model~\cite{LODCloud}.
In order for this knowledge to be reused,
we need a~logic with support for triples.
Whereas typical programming languages can all support triples in some way,
treating them as first-class citizens decreases the effort of working with them.
For instance, while the popular Jena library provides triple support for Java,
an RDF document is not a~Java document,
so the RDF will need to be converted in Java-specific objects.
Programmatic access and manipulation happens through these objects
instead of directly at the RDF level.
In contrast, \nthree is a~superset of the Turtle serialization of RDF.
This means that each valid RDF document, when expressed in Turtle,
is by definition a~valid \nthree document.
As such, reasoners that support \nthree natively support triples,
without the impedance mismatch of other languages.
This means that all RDF responses on the Web,
including those of RDF Web~APIs,
can be interpreted directly by \nthree reasoners.
Usage of \nthree thus brings direct compatibility
with a~body of billions of knowledge facts on the Web.

Additionally, hundreds of ontologies are expressed as RDF triples using RDFS and OWL.
That means they use RDF triples with RDFS and OWL predicates and objects
to define ontological meaning.
Because they are expressed as triples, \nthree reasoners can process them.
However, since many common RDFS and OWL constructs
have been captured in publicly available \nthree rules~\cite{EYE},
\nthree reasoners can also apply their semantics
without requiring native support for neither RDFS nor~OWL.
For example, the OWL class \verb!owl:SymmetricProperty!
is implemented by stating that for properties~$p$ that are symmetric,
the existence of a~triple $(s, p, o)$ implies the existence of $(o, p, s)$.
Therefore, by using \nthree, we can directly incorporate available ontological~knowledge.

Finally, the most important reason to choose \nthree as an underlying description format
is that it can combine the above aspects together with descriptions of Web~APIs.
As argued above, the hypermedia-driven nature of REST Web APIs maps well to the RDF model.
If we use an \nthree rule to describe Web~APIs,
we can directly reason on the combination of Web~API descriptions,
existing triples, and existing ontologies.
Concretely, if a~Web API returns an RDF response,
we can directly combine that response with the composition,
thereby instantiating a~result placeholder with the actual results.
This is the core mechanism of the execution process
detailed in \cref{sec:PrePostProof}.
Furthermore, the closeness of RDF responses to N3~rules
plays an important role in description generation,
as discussed in \cref{sec:Generation}.

\subsubsection{Arguments Related to Logic Programming}
In order to be suitable for the approach we describe in this paper, a logical framework has to fulfill certain requirements:

\begin{description}
 \item[Semantic Web compatibility] As described above it is crucial for a format describing RESTful Web APIs to act directly on the Semantic Web. Logical knowledge
 in the Semantic Web is usually expressed in RDF triples. The chosen logic should natively support this format. 
 \item[Rules] Our framework describes the possible use of hypermedia APIs using rules. We therefore need a logic which supports rules.
 \item[Proofs] Our approach makes use of proofs produced by a reasoner. These proofs are then used for further reasoning. The logic of choice should 
 be supported by a reasoner which is able to provide such proofs. Furthermore there needs to be a format to express proofs using the logic.
 \item[Conjunctions in the head of a rule] While conjunctions in the head of a rule are normally considered syntactic sugar \cite{sugar}, they become important if 
 rules appear in a proof for the following reason:
 in our approach we use the instantiated conjunction in the head of a rule as it occurs in a proof to infer concrete practical instructions. 
 To understand this idea consider a simple example: imagine that we have a source $s$ which is aware of 
 the birth date of any person. In a first order style
 we could describe that using the following rule
 \[
 \forall x:  \text{person}(x)\implies (\exists y: \text{birthDateOf}(x, y) \wedge \text{canBeAskedFor}(s,y))
 \]
If we now ask for the birth date of $\text{person}(\text{leonhard\_euler})$,
we get as a result that there exists a birth date: \[\exists y: \text{birthDateOf}(\text{leonhard\_euler}, y).\] 
But the present rule derives more, we also get an instruction how to get this particular information: 
\[
 \exists y: \text{birthDateOf}(\text{leonhard\_euler}, y) \wedge \text{canBeAskedFor}(s,y)
\]
As one rule derives both, the fact that there exists a birth date and an actual instruction how to get it, both things would appear in a proof, while
in case of having rules with atomic heads we would have to specifically ask for the instruction.
%
%
But there could be many ways 
to obtain 
that birth date: maybe we could invoke a service calculating it by knowing the exact age of the person and the date of death, 
maybe it is an even more complex process of first asking for the names of the
concrete sources and then asking these sources for information. Such things are not known beforehand.
Therefore it is not a solution to simply include the \emph{canBeAskedFor} predicate in the query.
 %
This is the same in our case where these \emph{sources} are web services. 
By using only rules with (quantified) atomic heads we would lose relevant information in the proof. 
The logic needs thus to support conjunction in the head of a rule. 
 \item[Existential quantification in the head of a rule]
The algorithm presented in this paper employs proofs to make, execute and adjust a plan of sequent API calls in order to achieve a given goal. 
The possible calls are described using rules with existentially quantified variables in their head. 
This can be understood as follows: given a certain situation, there exists a call which can be executed to obtain a new situation. While some 
details of this ``new situation'' are already clear before executing the call, some things can only be known afterwards.
This kind of uncertainty is crucial for our algorithm. The logical framework should support existentially quantified variables in the head of rules. 
\end{description}


Given the first requirement, we take a closer look into Semantic Web frameworks:
RDF, RDFS and OWL do not natively support rules, but there are several rule formats defined on top of them. 
Among them, SWRL \cite{swrl}, WRL \cite{wrl} or RIF \cite{rif}.

The Semantic Web Rule Language (SWRL) was built on top of OWL and can be understood as its rule extension. 
Rules are expressed in terms of OWL concepts
(classes, properties, individuals, etc.). SWRL thereby inherits OWL's strong separation between classes and individuals which can form a burden when
reasoning over plain RDF data.
To the best of our knowledge there is no reasoner which outputs complete proofs for derivations done applying SWRL rules. 
SWRL supports conjunctions in the head of a rule but it does not allow new existential variables in that position.
Especially the missing support for existential rules and the fact that there is no exchangeable proof produced by SWRL reasoners
made us opt against this rule language.

%

The Web Rule Language (WRL) and the Rule Interchange Format (RIF) are both based on Frame Logic (F-Logic) \cite{flogic}.
F-Logic was invented to combine object oriented and declarative programming. It extends classical predicate calculus with the concepts of objects,
classes, and types. Object oriented ideas such as inheritance are supported as well as classical rule inference. 
This richness of features made F-Logic a good choice to base the rule-based ontology language WRL on, and---as it was designed for rule exchange---even 
a better choice for RIF. Both, WRL and RIF have variants which support conjunction in the head of a rule. Neither the Basic Logic Dialect (BLD) of RIF nor
WRL do allow new existentials in the head of a rule. The direct reasoning support for both formats is very limited, 
the IRIS system \cite{iris} provides reasoner for both but does not support proofs. 
As far as we know there is also no other reasoner which produces proofs based on RIF rules.
Nevertheless, RIF can indeed serve as an exchange format for rules.


In contrast to the above mentioned formats, N3 Logic fulfills all of the requirements mentioned above. It is a rule language which easily combines with RDF, RDFS 
and---via OWL RL---also with OWL. The two most common reasoners for N3, cwm and EYE, 
both output and check proofs. N3 rules can have conjunctions and existentially quantified variables in their head. 
Therefore, we chose N3 for our purposes.


Note that by fulfilling the last requirement, support of existential variables in the head of rules, Notation3 Logic is strongly 
related to 
the Datalog$^{\pm}$ framework \cite{datalogpm,exrules}.
Several recent approaches support reasoning with existential rules on top of  ontologies as for example 
Graal \cite{graal} or IRIS$^\pm$ \cite{irispm}. These implementations are very promising, but still under development. Their rule format is
not as close to RDF as N3 is
and there is no support for proofs yet.


\section{Describing Hypermedia APIs with \restdesc}
\label{sec:RESTdesc}
\subsection{Example Use Case}\label{usecase}
As a~guiding example, we will introduce an exemplary hypermedia API
in the domain of image processing.
It offers functionality such as the following:
\begin{itemize}
  \item uploading images
  \item resizing an image
  \item changing the colors of an image
  \item combining multiple images
  \item \ldots
\end{itemize}
Since hypermedia APIs follow the REST architectural constraints,
this functionality is realized in a~resource-oriented, hypermedia-driven way.
For instance, the API does \emph{not} offer methods such as \verb!resizeImage!;
rather, it exposes \verb!image! and \verb!thumbnail! resources,
which are connected to each other through hyperlinks and/or forms.
In order to enable machines to interpret responses of this API,
these resources are available as representations in the machine-readable format RDF.

The problem statement is to make a~generic Web client
that can autonomously reach a~complex goal such as
``obtain a~scaled, black-and-white version of \verb!image.jpg!'',
without hard-coding any knowledge about this API.
The client should be able to execute complex instructions
on (combinations of) other hypermedia APIs as well.

Since this API is a~hypermedia API,
we cannot create a~detailed plan in advance,
because the exact steps are not known at design time.
However, at the same time,
an automated client cannot only follow hypermedia controls,
because there would be no way to know
whether the controls it chooses lead to the given goal.
In other words,
while hypermedia gives machines access to resources via links,
it does not explain them the functionality offered through those links.

\restdesc descriptions~\cite{verborgh_wsrest_2012,verborgh_mtap_2013}
allow to express the functionality of hypermedia APIs
by explaining the role of a~hypermedia control in an~API.
That way, if a~machine client encounters a~hypermedia control,
it can interpret the results of following it.
Furthermore, \restdesc descriptions allow the composition of a~high-level plan
that can guide machine clients through an interaction with a~hypermedia API.

In the remainder of \cref{sec:RESTdesc},
we formalize \restdesc and its underlying \nthree logic.
\Cref{sec:Proof,sec:Composition} detail hypermedia API composition and execution,
using the above image processing hypermedia API as an example.

\subsection{Formalization of Notation3}
\label{nthree}


\restdesc descriptions are expressed in the Notation3 (\nthree) rule language~\cite{N3Logic,Notation3}.
We will introduce the \nthree language and its logic,
focusing on the aspects relevant to our purposes. Our formalization is based on the formalization we gave in a previous paper \cite{semN3} and the informal
semantic descriptions given in the above mentioned sources.
\nthree augments the RDF model with symbols for quantification, implication, and statements about formulas:

\begin{definition}[Basic N3 vocabulary]
An \emph{N3 alphabet~$A$} consists of the following disjoint classes of symbols:
\begin{itemize}
\item A set $U$ of \uri symbols.
\item A set $V=V_E\mathbin{\dot{\cup}} V_U$  of (quantified) variables, with $V_E$ being the set of existential variables
and $V_U$ the set of universal variables.
\item A set $L$  of literals.
\item A boolean literal \verb!false!
\item Brackets \verb!{!, \verb!}!, \verb!(!, \verb!)!
\item Logical implication $\verb!=>!$ 
\item Period \verb!.!
\end{itemize}
\end{definition}

We define the elements of $U$ as in the corresponding specification~\cite{iri}.
\nthree allows to abbreviate URLs as prefixed names~\cite{turtle}.
Literals are strings beginning and ending with quotation marks `\verb!"!';
existentials start with `\verb!_:!', universals with `\verb!?!'.

\nthree does not distinguish between predicates and constants---%
a single \uri symbol can stand for both at the same time---%
so the first-order-concept of a~\emph{term}
has a~slightly different counterpart in \nthree: an \emph{expression}.
Since the definition of expressions (Definition~\ref{expression})
is closely related to the concept of a~formula (Definition~\ref{formula}),
the two following definitions should be considered together.

\begin{definition}[Expressions]\label{expression}
  Let $A$ be an \nthree alphabet.
  The set of \textit{expressions} $E \subset A^{*}$ is
  defined as follows:
  \begin{enumerate}
    \item Each \uri is an expression.
    \item Each variable is an expression.
    \item Each literal is an expression.
    \item \label{list} If $e_1,\ldots,e_n$ are expressions, $\verb!(!e_1 \ldots e_n\verb!)!$ is an expression. 
    \item \label{false} \verb!false! is an expression.
    \item $\verb!{ }!$ is an expression.
    \item \label{fe} If $f\in F$ is a formula, then $\verb!{!f\verb!}!$ is an expression. 
  \end{enumerate}
  The expression defined by \ref{list} is called a \textit{list}.
  We call the expressions defined by \ref{false}--\ref{fe}
  \textit{formula expressions} and denote the set of all formula expressions by $\mbox{\textit{FE}}$.
\end{definition}

Note that point \ref{fe} of the definition above makes use of formulas, which are defined as follows:

\begin{definition}[\nthree Formulas]
    \label{formula}
    The set~$F$ of \textit{\nthree formulas} over alphabet~$A$ is recursively defined as follows:
    \begin{enumerate}  
      \item \label{1} If $e_1, e_2, e_3 \in E$, then the following is a formula, called \textit{atomic} formula: \[e_1~ e_2~ e_3.\] 
      \item \label{2} If $t_1, t_2$ are formula expressions then the following is a formula, called \mbox{\textit{implication}}: \[t_{1} \verb!=>!~t_{2}.\] 
      \item \label{n} If $f_1$ and $f_2$ are formulas, then the following is a formula, called \textit{conjunction}: \[f_1 f_2\] 
    \end{enumerate}
\end{definition}

We will refer to a~formula without any variables as a~\textit{ground formula}.
Analogously, we call expressions without any variables \textit{ground expressions}.
We denote the corresponding sets by $F_g$ respectively $E_g$. 
An formula or expression which does not contain universal variables is called \emph{universal free}. 
The set of universal free formulas (possibly containing existentials) is denoted by $F_e$, the set of universal free expressions by $E_e$.

In the examples in the remainder of this paper, we will use the common \rdf shortcuts:

\begin{remark}[Syntactic variants]
\begin{itemize}
\item A formula consisting of two triple subformulas starting with the same element \verb!<d> <p> <e>. <d> <q> <f>.! can be abbreviated using a semicolon: \verb!<d> <p> <e>; <q> <f>.!\\ 
Two triple formulas sharing the first two elements  \verb!<d> <p> <e>. <d> <p> <f>.! can be abbreviated using a comma: \verb!<d> <p> <e>, <f>.!
  \item \verb![]! can be used as an expression and is a shortcut for a new existential variable. So \verb![] <p> <d>.! stands for \verb!_:x <p> <d>.!
 \item An expression of the form \verb![<p> <o>]! is a shortcut for a new existential variable \verb!_:x!,
   which is subject to the statement \verb!_:x <p> <o>.!
   So \verb! <s> <p> [<q> <o>].! stands for \verb!<s> <p> _:x. _:x <q> <o>.!
 \item \verb!a! is a~shortcut for \verb!rdf:type!~\cite{RDF}.
 \end{itemize}
\end{remark}

To emphasize the difference between brackets which form part of the \nthree vocabulary, i.e. ``\verb!(!'', ``\verb!)!'', ``\verb!{!'', and ``\verb!}!'', 
and the brackets occurring in mathematical language, we will underline the \nthree brackets in all definitions where both kinds of brackets occur.

Simple N3 triples of the form \verb!:s :p :o! can be understood as a first order formula $p(s,o)$. We call \texttt{:p} the predicate, \texttt{:s} 
 the subject and \texttt{:o} the object of a triple. More complicated constructs often contain variables:
in Notation3 existential and universal variables 
are implicitly quantified. The scope of this quantification depends on how deeply nested
a variable occurs in a formula. To be able to make statements about that we define:

\begin{definition}[Components of a formula]
Let $f\in F$ be a formula and $c: E \rightarrow 2^E$ a~function such that:
\[c(e)=\begin{cases}
  
  c(e_1)\cup\ldots\cup c(e_n) & \text{if }e=\underline{\texttt{(}}e_1 \ldots e_n\underline{\texttt{)}}\text{ is a list,}\\
  \{e\}  & \text{otherwise.}
\end{cases}\]

We define the set $\textit{comp}(f)\subset E$ of components of $f$ as follows:
 \begin{itemize}
  \item If $f$ is an atomic formula of the form $e_1~ e_2~ e_3.$, $\textit{comp}(f)=c(e_1)\cup c(e_2)\cup c(e_3)$.
  \item If $f$ is an implication of the form $t_{1} \verb!=>!~t_{2}.$, then $\textit{comp}(f)=\{t_1, t_2\}$.
  \item If $f$ is a conjunction of the form $f_1 f_2$, then $\textit{comp}(f)=\textit{comp}(f_1)\cup \textit{comp}(f_2)$.
 \end{itemize}
 Likewise, for $n\in \mathbb{N}_{>0}$, we define the components of level $n$ as:
 \begin{flalign*} 
  \textit{comp}^n(f):= &  
  \{e\in E|\exists f_1,\ldots, f_{n-1}\in F: 
   e\in \textit{comp}(f_1)\wedge  \underline{\texttt{\{}}f_1\underline{\texttt{\}}}\in \textit{comp}(f_2)\wedge \ldots\\& \wedge  \underline{\texttt{\{}}f_{n-1}\underline{\texttt{\}}}\in 
  \textit{comp}(f)\} 
\end{flalign*} 
\end{definition}

Now, we can distinguish between direct components and 
nested components. As an example take the following \nthree formula:
\begin{Verbatim}[fontsize=\normalsize] 
		:John :says {:Kurt :knows :Albert.}.
\end{Verbatim}
Direct components are \verb!:John!, \verb!:says! and \verb!{:Kurt :knows :Albert.}! while \verb!:Kurt!,  \verb!:knows! and  \verb!:Albert! are nested components of
level two. \nthree allows more complex structures, formulas like
\[
\verb! {{?x :p :a.} => {?x :q :b.}.} => {{?x :r :c.} => {?x :s :d.}.}.! 
\]
where for example the predicate \verb!:p! occurs as a component of level three are valid in \nthree. 
Such deeply nested structures require a careful treatment of scoping for variables occurring in them. 
Note for example that the above formula should be interpreted as
\[
(\forall x_1: p(x_1,a)\rightarrow q(x_1,b))\rightarrow (\forall x_2: r(x_2,c) \rightarrow s(x_2,d))
\]
and \emph{not} as
\[\forall x: ((p(x,a)\rightarrow q(x,b))\rightarrow ( r(x,c) \rightarrow s(x,d)))\]
Due to this particularities and because deeply nested structures are no requirement for our framework, we limit the
considerations of
this paper to \emph{simple formulas}
and refer the reader interested in more details to the corresponding publication \cite{semN3}.

\begin{definition}[Simple formulas]
We call an \nthree formula $f$ \emph{simple} iff for all $n \in \mathbb{N}$, $n>2$: $\operatorname{comp}^n(f)=\emptyset$
\end{definition}

Universal variables in \emph{simple formulas} can be understood as universally quantified on the top level of the formula.
The formula 
\[
 \verb!{?x :p :o1.} => {?x :q :o2.}.!
\]
is interpreted as 
\[
 \forall x: p(x, o_1)\rightarrow q(x, o_2)
\]
The scope of an existential variable is always the formula expression it occurs in as a direct component. The formula
\[
 \verb!_:x :says {_:x :knows :Albert.}.!
\]

is interpreted as 
\[
 \exists x_1: \text{says}(x_1, (\exists x_2: \text{knows}(x_2, Albert)))
\]

As the existential quantification of blank nodes, in contrast to universal quantification, 
only counts for the direct formula they occur in and not for their subordinated formulas, 
we define two ways to apply a substitution:

\begin{definition}[Substitution]
Let $A$ be an \nthree alphabet 
and $f\in F$ an \nthree formula over~$A$. 
\begin{itemize}
 \item A \emph{substitution} is a finite set of pairs of expressions $\{v_1/e_1, \ldots, v_n/e_n\}$ where each $e_i$ is an expression and each $v_i$ 
 a variable such that $v_i\neq e_i$ and 
 $v_i \neq v_j$,
 if $i\neq j$.  
 \item 
 For a formula $f$ and a substitution $\sigma=\{v_1/e_1, \ldots, v_n/e_n\}$, we obtain the \emph{component application} 
 of $\sigma$ to $f$, $f\sigma^c$, by simultaneously replacing each $v_i$ 
 which occurs as a \emph{direct component} in $f$ by the corresponding expression $e_i$. 
 \item 
 For a formula $f$ and a substitution $\sigma=\{v_1/e_1, \ldots, v_n/e_n\}$, we obtain the \emph{total application} of $\sigma$ to $f$, $f\sigma^t$, 
 by simultaneously replacing each $v_i$ 
 which occurs as a \emph{direct or nested component} in $f$ by the corresponding expression~$e_i$. 
 \end{itemize}
\end{definition}

As the definition states, component application of a substitution only changes the direct components of a formula. 
For a substitution $\mu=\{\verb!_:x!/ \verb!:Kurt!\}$ we obtain:
\begin{multline}
(\texttt{\_:x :says \{\_:x :knows :Albert.\}.})\mu^c  =\nonumber \\ (\texttt{ :Kurt :says \{\_:x :knows :Albert.\}.})\nonumber
\end{multline}
A total application of $\sigma=\{\verb!_:x!/ \verb!:Kurt!\}$ in contrast, replaces each \emph{occurrence} of a variable in a formula: 
\begin{multline}
(\texttt{?x :says \{?x :knows :Albert.\}.})\sigma^t =\\ (\texttt{:Kurt :says \{:Kurt :knows :Albert.\}.})\nonumber
\end{multline}

For simple formulas we can now define interpretation and semantics:

\begin{definition}[Interpretation]
An interpretation $\mathfrak{I}$ of
an alphabet $A$ consists of:
\begin{enumerate}
\item A set $\mathcal{D}$ called the domain of $\mathfrak{I}$.
\item A function $\mathfrak{a}: 
E\setminus V \rightarrow \mathcal{D}$ called the object function.
\item A function $\mathfrak{p}:
\mathcal{D} \rightarrow 2^{\mathcal{D} \times \mathcal{D}}$ called the predicate function.
\end{enumerate}
\end{definition}

Note that in contrast to the classical definition of \rdf-semantics \cite{RDFSemantics} our domain does not distinguish between properties (IP) 
and resources (IR). 
The definitions are nevertheless compatible, as we assume $\mathfrak{p}(p)=\emptyset\in 2^{\mathcal{D} \times \mathcal{D}}$
for all resources $p$ which are not properties (i.e. $p \in \text{IR}\setminus \text{IP}$ in the \rdf-sense). 
By extending given \rdf ground interpretation functions to Notation3 interpretation functions, 
the meaning of all valid \rdf triples can be kept in Notation3 Logic. 
%

\begin{definition}[Semantics of \nthree\label{sem_n3}]
Let $\mathfrak{I}=(\mathcal{D},\mathfrak{a,p})$ be an interpretation of $A$
and let  $f$ be a \emph{simple} formula over $A$. 
Then the following holds:
\begin{enumerate}
 \item\label{quant1} If $f$ contains universal variables, $\mathfrak{I}\models f$ iff $\mathfrak{I}\models f\sigma^t$ 
 for every substitution $\sigma: V_U\rightarrow E_e$. 
 \item \label{quant2} If $f$ is universal free and $W=\text{comp}(f)\cap V_E \neq \emptyset$, $\mathfrak{I}\models f$ 
 iff $\mathfrak{I}\models f\mu^c$ for some substitution $\mu: W\rightarrow E_g$.
  \item If $f$ is universal free and $\text{comp}(f)\cap V_E=\emptyset$:
  \begin{enumerate}
   \item If $f$ is an atomic formula $c_1\, p\, c_2$, then  $\mathfrak{I} \models c_1\, p\, c_2$. 
  iff $(\mathfrak{a}(c_1),\mathfrak{a}(c_2))\in\mathfrak{p}(\mathfrak{a}(p))$.
  \item If $f$ is a conjunction $f_1f_2$, then  $\mathfrak{I}\models f_1 f_2$ iff $\mathfrak{I}\models f_1$ and $\mathfrak{I}\models f_2$.\label{conj}
  \item If $f$ is an implication\label{implication} 
  \begin{itemize}
  \item $\mathfrak{I} \models \verb!{! f_1 \verb!}! \verb!=>! \verb!{! f_2 \verb!}!$ iff $\mathfrak{I} \models f_2$ if $\mathfrak{I} \models f_1$.
  \item \label{fal2} 
  $\mathfrak{I} \models \verb!{! f_1 \verb!}! \,\verb!=>!\, \verb!false!$. iff $\mathfrak{I} \not\models f_1$.
   \item $\mathfrak{I} \models \verb!{ }! \verb!=>! \verb!{! f_2 \verb!}!$. iff $\mathfrak{I} \models f_2$.
   \item $\mathfrak{I} \models \texttt{false => }e$ and $\mathfrak{I}\models e\texttt{ => \{~\}}.$, for all $e\in E$, 
   \end{itemize}
 \end{enumerate}
\end{enumerate}
  
\end{definition}

Note that by first handling universal variables (point \ref{quant1} in the definition) and then treating existentials (point \ref{quant2}) the definition makes sure 
that in case of conflicts the universal quantifier is outside of the existential. In N3 the statement
\[
 \verb!?x :loves _:y.!
\]
has to be interpreted as $\forall x \exists y: \text{loves}(x,y)
\text{ and \emph{not} as }
\exists y \forall x : \text{loves}(x,y)$.

\noindent
We now define a model:

\begin{definition}[Model]
Let $\Phi$ be a set of \nthree formulas. We call an interpretation $\mathfrak{I}=(\mathcal{D},\mathfrak{a,p})$ a \textit{model} of $\Phi$ iff $\mathfrak{I}\models f$ for every formula $f\in \Phi$.
\end{definition}
As in first order logic, we can define the notion of logical implication:

\begin{definition}[Logical implication]\label{log_impl}
Let $\Phi$ be a set of \nthree formulas  and $\phi$ a formula over the same \nthree alphabet $A$. We say that $\Phi$ (logical) implies 
$\phi$ ($\Phi \models \phi$) iff every
model $\mathfrak{I}\models \Phi$ is also a model of $\phi$.
\end{definition}




\subsection{\restdesc Descriptions}\label{rd}

\restdesc descriptions are designed to explain how a hypermedia API can be used to perform a specific action. 
Such a process of using an API consists of different steps:
given all needed information, the client sends an \http request to the hypermedia API on a server.
The API interprets the request and,
if possible, reacts by fulfilling the indicated task
or retrieving the information requested. The server sends a~response and thereby creates a new situation. As mentioned in \cref{api}, this process 
is called an API operation.
Hypermedia APIs are commonly able to perform different kinds of operations.

To describe such an operation in \nthree
a formalization of \http
is needed. 
We use the \rdf vocabulary as defined in the corresponding W3C Working Draft \cite{httprdf}.
In order to facilitate the understanding of the following sections, we give a short overview of the \http predicates used in this paper.

\begin{definition}[\http predicates\label{httppr}]
Let $A$ be an \nthree alphabet,
$M$ be a set of \http method names,
$\{ \texttt{"GET"},\texttt{"POST"},\texttt{"PUT"},\linebreak[1]
\texttt{"DELETE"},\texttt{"HEAD"},\texttt{"PATCH"}\}\subset M \subset L$,
$\verb!u!\in U$ an \http message,
$\verb!v!\in U\cup L$
and $\mathfrak{I}=(\mathcal{D},\mathfrak{a,p})$ an interpretation of $A$.
Then:

\begin{itemize}
 \item  $\mathfrak{I}\models$ \verb!u http:headers v.! iff \texttt{v} is an \http header of \verb!u!.
 \item $\mathfrak{I}\models$ \verb!u http:body v.! iff \verb!v! is the \http body of \verb!u!. 
  \item $\mathfrak{I}\models$ \verb!u http:methodName v.! iff $\verb!u!$ is a request and 
  $\verb!v!\in M$ its method name.
  \item $\mathfrak{I}\models$ \verb!u http:requestURI v.! iff $\verb!u!$ is a request and $\verb!v!$ is its \URL.
  \item $\mathfrak{I}\models$ \verb!u http:resp v.! iff $\verb!u!$ is a request and $\verb!v!$ is its responding \http message.
\end{itemize}
\end{definition}

%

This vocabulary can be used to describe \http-requests.  Such a request must always have a method name and a request \uri.
Using owl-vocabulary, such requests can be defined by the following class:
\begin{verbatim}
http:Request rdfs:subClassOf http:Message, 
    [
    rdf:type     owl:Restriction ;
    owl:onProperty http:methodName ;
    owl:qualifiedCardinality "1"^^xsd:nonNegativeInteger
    ],
    [ 
    rdf:type owl:Restriction ;
    owl:onProperty http:requestURI ;
    owl:qualifiedCardinality "1"^^xsd:nonNegativeInteger
    ] .
\end{verbatim}
These two properties also have to be specified in an \http request description:

\begin{definition}[\http request description]
  \label{def:HttpRequestDescription}
 Let $A$ be an \nthree alphabet which contains a set $H$ of \http predicates including those defined in Definition~\ref{httppr}. 
  \begin{itemize}
\item An \emph{\http request description} is a conjunction $f=f_1 f_2 \ldots f_n\in F$ of atomic formulas with the following properties:
\begin{itemize}
 \item All atomic formulas $f_i$ share the same existential variable $\verb!_:x!\in V_E$ as a subject.
 \item The predicate of each formula $f_i$ is an \http-predicate, i.e. $f_i$ has the form $\verb!_:x :h!_i \verb!:o!_i\verb!.!$ with $\verb!:h!_i\in H$.
 \item The conjunction $f$ contains one atomic formula $f_i$ with the predicate\linebreak \verb! http:methodName!
 and one formula $f_j$ with the predicate\linebreak 
 \verb! http:requestURI!.
 \item The object of every atomic formula $f_i=\verb!_:x :h!_i \verb!:o!_i\verb!.!$ with $\verb!h!_i \neq \verb!http:resp!$ is either a universal variable, 
 a URI or a literal, $\verb!:o!_i\in V_U\cup U\cup L$.
\end{itemize}

%
\item  We call an \http request description $f=f_1\ldots f_n$ \textit{sufficiently specified},
if  $\texttt{:o}_i\in E_g$ for every triple $f_i=\texttt{\_:x :h}_i \texttt{:h}_i$ with $\texttt{:p}_i\neq \texttt{http:resp}$. 
\end{itemize}

\end{definition}

The definition reflects 
the syntactical requirements
to an \http request description, 
it should contain  the URL and the method name of the described request and it can contain additional information which can be
described using the \http-predicates.
If the object of these formulas are instantiated, i.e. sufficiently specified, they can be sent to a server and, 
if they contain all necessary information, executed by an API which will return the \http response.
\restdesc descriptions enable us to specify
the intended functionality of a~hypermedia API's operations:

\begin{definition}[\restdesc description]
Let $A$ be an \nthree alphabet containing the predicates defined in Definition \ref{httppr}, $F$ the set of formulas over $A$.
A \restdesc description $f\in F$ of a hypermedia API operation is a \emph{simple} \nthree formula 
of the form:
\[	  \verb!{ precondition } => { http-request  postcondition }.!\]
where \verb!precondition!, \verb!http-request! and \verb!postcondition! are \nthree formulas over $A$ with the following properties:
\begin{enumerate}
 \item \emph{\texttt{precondition}} describes the resources needed to execute the operation and does not contain 
 any existential variable. 
 \item \emph{\texttt{http-request}} is an \http request description which describes a request which can be used to obtain the desired 
 result of executing the operation. 
 It contains no triple having the same subject as the \verb!http-request!.
All universal variables which occur in \verb!http-request!
 do also occur in \verb!precondition!. 
 
 \item \emph{\texttt{postcondition}} describes one or more results obtained by the execution of the operation. All universal variables contained in \verb!postcondition! 
also occur in \verb!precondition!. 
\end{enumerate}

\end{definition}
By making sure that the subject of any triple in the postcondition is different than the subject of the request, we make both syntactical distinguishable. 
Note that a RESTdesc description is an existential rule as defined in the Datalog$^{\pm}$ framework \cite{datalogpm}:
our restriction on universal variables,
that all universals in the head of the rule should also occur in the body,
is very similar to Datalog \cite{datalog}
and our rules 
 allow (and expect) new existentials in the consequence. 

The reasons for the restrictions will become more clear in \cref{sec:Reasoning},
where we show that for every ground instance of the precondition,
the \http request is sufficiently specified and the postcondition will not contain any universal variables.
From an operational point of view, the remaining existential variables in the postcondition
are those which are expected to be grounded through the execution of the \http request.

\begin{lstlisting}[
  float=t,
  caption={\restdesc description of the action ``obtaining a~thumbnail'' \emph{(desc\_thumbnail.n3)}},
  label=lst:Thumbnail]
§\textcolor{gray}{@prefix dbpedia: <http://dbpedia.org/resource/>.}§
§\textcolor{gray}{@prefix dbpedia-owl: <http://dbpedia.org/ontology/>.}§
§\textcolor{gray}{@prefix ex: <http://example.org/image\#>.}§
§\textcolor{gray}{@prefix http: <http://www.w3.org/2011/http\#>.}§

{ ?image ex:smallThumbnail ?thumbnail. }
=>
{
  _:request http:methodName "GET";
            http:requestURI ?thumbnail;
            http:resp [ http:body ?thumbnail ].

  ?image dbpedia-owl:thumbnail ?thumbnail.
  ?thumbnail a dbpedia:Image;
             dbpedia-owl:height 80.0.
}.
\end{lstlisting}

\Cref{lst:Thumbnail} shows a~description
that explains the \verb!smallThumbnail! relation in a~hypermedia API.
The \emph{precondition} demands the existence of
a~\verb!smallThumbnail! hyperlink
between an \verb!?image! resource
and a~\verb!?thumbnail! resource.
The \http request
is a~\verb!GET! request to the \URL of \verb!?thumbnail!.
The response to this request will be a~representation of \verb!?thumbnail!.
These characteristics of this representation are detailed in the remainder of the \emph{postcondition}.
This states that the original \verb!?image!
will be in a~\verb!thumbnail! relationship
(the meaning of which is defined by the \dbpedia ontology~\citeS{DBpedia})
with \verb!?thumbnail!.
Furthermore, \verb!?thumbnail! will be an \verb!Image!
and have a~\verb!height! of \verb!80.0!.

There are two different ways to interpret this description:
First the declarative, static way as defined in \cref{nthree},
which could be phrased as
\enquote{the existence of the \texttt{smallThumbnail} relationship
implies the existence of a~\texttt{GET} request
which leads to an~80px-high thumbnail of this image.}

The second interpretation is the operational, dynamic way.
In this case, a~software agent has a~description of the world,
against which the description is \emph{instantiated},
i.e., the rule is applied.

Thus, given a~concrete set of triples, such as:
\begin{Verbatim}
    </photos/37> ex:smallThumbnail </photos/37/thumb>.
\end{Verbatim}
the description in \cref{lst:Thumbnail} would be instantiated to: 
\begin{Verbatim}
    _:request http:methodName "GET";
              http:requestURI </photos/37/thumb>;
              http:resp [ http:body </photos/37/thumb> ].

    </photos/37> dbpedia-owl:thumbnail </photos/37/thumb>.
    </photos/37/thumb> a dbpedia:Image;
                       dbpedia-owl:height 80.0.
\end{Verbatim}
Thereby, the description has been instantiated into a~concrete \http request
that can be executed by the agent.
Note how this instantiation directly results in RDF triples,
which can be interpreted by any RDF-compatible client.
The request has been sufficiently specified
as defined in Definition~\ref{def:HttpRequestDescription}.
In addition, the instantiated postcondition
explains the properties realized by this concrete request.
Here, an \http \verb!GET! request to \url{/photos/37/thumb}
will result in a~thumbnail of the image \url{/photos/37}
that will have a~height of 80~pixels.
This dynamic interpretation is helpful to agents
that want to understand the impact
of performing a~certain action on resources they have at their disposition.

\begin{lstlisting}[
  float=b,
  caption={\restdesc description of the action ``uploading an image'' \emph{(desc\_images.n3)}},
  label=lst:Images]
§\textcolor{gray}{@prefix dbpedia: <http://dbpedia.org/resource/>.}§
§\textcolor{gray}{@prefix dbpedia-owl: <http://dbpedia.org/ontology/>.}§
§\textcolor{gray}{@prefix ex: <http://example.org/image\#>.}§
§\textcolor{gray}{@prefix http: <http://www.w3.org/2011/http\#>.}§

{ ?image a dbpedia:Image. }
=>
{
  _:request http:methodName "POST";
            http:requestURI "/images/";
            http:body ?image;
            http:resp [ http:body ?image ].
  ?image ex:comments _:comments;
         ex:smallThumbnail _:thumb.
}
\end{lstlisting}

\restdesc descriptions are not limited to \verb!GET! requests.
They can also describe state-changing operations,
for instance, those realized through the \verb!POST! method.
\Cref{lst:Images} shows a~description for an image upload action.
The postconditions contain existential variables that are not referenced
(\verb!_:comments! and \verb!_:thumb!),
which might appear strange at first sight.
However, these triples are important to an agent
as they convey an \emph{expectation} of what happens when an image is uploaded.
Concretely, any uploaded image will receive a~\verb!comments! link
and a~\verb!smallThumbnail! link.
Even though the exact values will only be known at runtime
when the actual \verb!POST! request is executed,
at design-time, we are able to determine that there will be several links.
The meaning of those links is in turn expressed by other descriptions,
such as the one in \cref{lst:Thumbnail} discussed~above.

\section{Proof of a~Composition's Correctness}
\label{sec:Proof}

%
%

\subsection{Compositions of Hypermedia APIs}
\label{subsec:Definition}
Having introduced a formal way to describe the function of hypermedia APIs in the last chapter,
we will now focus on the proofs which can be created using these rules.
More concretely, we examine proofs which confirm that a certain combination of API calls brings us to a desired goal.
We therefore take a closer look at the problem itself.
Given a set of possible API operations,
we want to achieve a~goal from an initial state.
Furthermore, we might have some additional knowledge that can be incorporated.
The above can be expressed in \nthree as follows:

\begin{definition}[API composition problem]\label{apicomp}
Let~$F$ be the set of simple \nthree formulas over an alphabet~$A$ which contains the predicates defined in definition \ref{httppr}.
An \textit{API composition problem} consists of the following formulas:
\begin{itemize}
 \item A set $H\subset F_g$ of ground formulas capturing all resource and application states the client is currently aware of,  the \textit{initial state}.
 \item 
 A formula $g\in F$ with $\text{comp}^2(g)=\emptyset$, which does not contain existential variables, the \textit{goal state} which
  indicates on a symbolic level what the client wants to achieve.
 \item A set $R$ of
 \restdesc descriptions or conjunctions of  \restdesc descriptions, describing
 all hypermedia APIs available to the client, the \textit{description formulas}.
 
 \item 
 A (possibly empty) set of \nthree formulas $B$, the \textit{background knowledge}, 
 where each $b\in B$ is either a ground formula or an implication $e_1\verb!=>! e_2.$ 
 which does not contain existential variables and 
 where each universal variable $e_2$ contains 
does also occur in $e_1$.
\end{itemize}
\end{definition}

Note that we put syntactical restrictions on our definitions: as already mentioned in \Cref{rd}, RESTdesc descriptions are
existential rules. The constraints put on the background knowledge make the rules contained in it expressible in Datalog \cite{datalog}. The initial state
contains only ground formulas and as the goal does not contain nested constructions or existential variables it can also be expressed in a Datalog rule.
This makes the whole problem at our disposal expressible in Datalog$^\pm$ \cite{datalogpm}. The reason for this restrictions 
will become clear in \cref{sec:Reasoning},
\\
If we talk about a proof as explained above, we mean evidence for the fact that
from $H\cup R \cup B$ follows $g'$, where $g'$ is a valid instance of $g$. 
As a normal hypermedia API composition problem tries to actually achieve real states
and obtain instantiated objects, our final target is to make $g'$ ground.

\subsection{Pre-proofs versus Post-proofs}
\label{sec:PrePostProof}
The focus of this section is not how to create proofs
but how to verify their correctness,
given that creation has already been performed.
This is not unlike the notion of proof in the classical Semantic Web vision~\cite{SemanticWeb},
where it is defined as~a means to assert the validity of a~piece of (static) information.
In this article, we extend this classical notion or proofs
to also include \emph{dynamic} information,
{\it i.e.,} data generated by Web~APIs.
As a~consequence of this dynamic nature,
we introduce two different kinds of proofs for an API composition problem $(H, g, R, B)$ as defined in Definition \ref{apicomp}:


\begin{description}
\item [a~{\bf pre-execution proof} \emph{(``pre-proof'')}]\hspace{-1ex},
      in which the assumption is made that execution of all API operations
      will behave as~expected, i.e., a proof in a classical sense which provides evidence for
      \[ H\cup R \cup B \models g'\]
\item [a~{\bf post-execution proof} \emph{(``post-proof'')}]\hspace{-1ex},
      in which an additional evidence for the goal 
      is provided by
      the API operations' actual execution results,
      which are purely static data. 
      This means 
      the resulting proof itself confirms that 
      \[H \cup R \cup B \cup \{\text{execution results}\} \models g'' \]
\end{description}
with $g'$ and $g''$ being instances of $g$.
Note that technically speaking,
for the same API operation every pre-proof is also a post-proof but,
if its execution actually yields a non-empty result,
not every post-proof is a pre-proof.
We are especially interested in the post-proofs of an API operation which are not its pre-proofs,
and call those proofs
\textbf{proper post-execution proofs}.
In other words, proper post proofs are those proofs
that actually make use of the information gained by the API call.
Intuitively, we expect those proofs to be shorter than the initial pre-proof,
as they have more relevant knowledge at their disposal.

The distinction between pre- and  (proper) post-proof exists because,
although error handling is possible,
one can never \emph{guarantee} that a~composition that has proven to work in theory
will \emph{always} and reliably achieve the desired result in practice,
since the individual steps can fail.
Some errors (such as disk failures or power outages) cannot be predicted
and may cause a~composition not to reach a~goal that would normally be possible.
Furthermore, a~composition can only be as adequate
as its individual descriptions, which could contain mistakes.
Therefore, the pre-proof necessarily has to make the additional assumption
that all APIs will function according to their description.
The pre-proof's objective thus becomes:
\emph{``assuming correct behavior of all APIs,
      the composition must lead to the fulfillment of the~goal.''}

While a~pre-proof can be validated before a~composition's execution,
the creation and validation of a~proper post-proof can only happen
when the execution's results are available.
At that stage, however, the environment's nature is no longer dynamic,
since the APIs' results are effectively available as data.
A~proper post-proof is therefore equivalent to a~data-based proof,
wherein the executed API operations contribute to the provenance information.
This provenance can be used to link the proper post-proof to the pre-proof,
indicating whether the non-failure assumption has corresponded to reality.
In the ideal case, this assumption indeed holds
and the proper post-proof is essentially a~revision of the pre-proof
in which the actual values returned by the hypermedia APIs are filled in.
The objective of the post-proof is thus
\emph{``given the execution results of some API operations,
      the composition must lead to the fulfillment of the~goal.''}
Thereby, a~proper post-proof after one operation
becomes the pre-proof of the next operation,
as indicated in \cref{fig:Proofs}.

Regular proofs do not contain dynamic information that needs to be obtained at runtime.
The extension to pre-proofs that contain dynamic information,
necessary to verify the correctness of a~composition \emph{before} it is executed,
requires a~mechanism to express when API operations are performed.
\restdesc descriptions can be considered rules
that \emph{simulate} the execution of a~hypermedia API,
using existentially quantified variables as placeholders for the API's results,
which are still unknown at the time the pre-proof is to be verified.

\begin{figure}[t]
  \begin{tikzpicture}[
      line width=.8pt,
      label distance = -1pt,
      state/.style = {
        circle, fill=white, draw=black,
        prefix after command={\pgfextra{\tikzset{
          every label/.style={black}, font=\small,
        }}},
      },
      request/.style = {
        rounded rectangle, draw=black, fill=white,
        text width=60pt, text depth=12pt, inner sep=5pt, align=center,
        font=\ttfamily\fontsize{9pt}{11pt}\selectfont\bfseries,
      },
      connector/.style = {
        black, line width=.8pt, -angle 60,
      },
      proof/.style = {
        black, draw,
        shorten <=2pt, shorten >=11pt, latex-,
        font=\small, inner sep=-2pt,
      },
      pretopost/.style = {
        black, draw, line width=.8pt, densely dotted, ->,
      },
      preproof/.style  = { align=left,  anchor=west, xshift=-.9pt },
      postproof/.style = { align=right, anchor=east, xshift=-.4pt },
      node distance = 90pt,
    ]
    \node[state,label=below:initial state] (Initial) {};
    \node[request,right=30pt of Initial] (POST) {POST\\ /images/};
    \node[request,right=of POST] (GET) {GET\\ /2748/thumb/};
    \node[state,label=below:goal,right=30pt of GET] (Goal) {};
    \begin{pgfonlayer}{background}
      \draw[connector] (Initial) -- (Goal);
    \end{pgfonlayer}
    \path[proof] (POST.west) ++(-15pt,0pt) -- ++(0pt,28pt)
      node[preproof]  {\textbf{\itshape pre-proof~1}\\[-2pt]\scriptsize API calls:~\kern-.2ex 2};
    \path[proof] (POST.east) ++(+20pt,0pt) -- ++(0pt,28pt)
      node[postproof] {\textbf{\itshape post-proof~1}\\[-2pt]\scriptsize API calls:~\kern-.2ex 1};
    \path[proof] (GET.west)  ++(-20pt,0pt) -- ++(0pt,28pt)
      node[preproof]  {\textbf{\itshape pre-proof~2}\\[-2pt]\scriptsize API calls:~\kern-.2ex 1};
    \path[proof] (GET.east)  ++(15pt,0pt)  -- ++(0pt,28pt)
      node[postproof, xshift=1.4pt] {\textbf{\itshape post-proof~2}\\[-2pt]\scriptsize API calls:~\kern-.2ex 0};
    \makeatletter
    \pgfsys@transformcm{1}{0}{0.23}{1}{-8pt}{0pt}
      \path[pretopost] ($(POST.east) + (26.0pt,32.2pt)$) -- ($(GET.west) + (-26pt,32.2pt)$);
    \makeatother
  \end{tikzpicture}
  \caption{A~proper post-proof of one operation becomes the pre-proof of the next.}
  \label{fig:Proofs}
\end{figure}
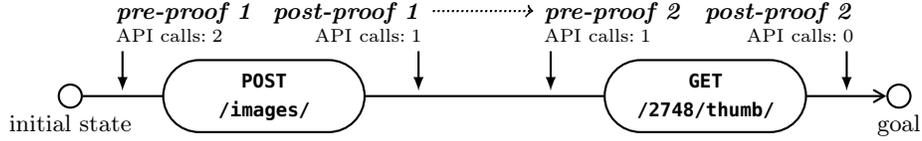

\subsection{Anatomy of a~Hypermedia API Composition Proof}
\label{sec:ProofAnatomy}
The \nthree proof vocabulary created in the context of the Semantic Web Application Platform (SWAP) \cite{SWAP} enables us to formalize proofs in a~machine-readable way.
This subsection gives a short introduction into the terminology used and the resulting proofs,
focusing on the aspects relevant to our purposes.

A proof is a conjunction of \nthree formulas describing 
inference steps
a reasoner has performed to come to a certain conclusion, so called \textit{proof steps}.

The vocabulary distinguishes between four different kinds of proof steps. We write them as deduction rules, using ``$\vdash$''.

\begin{definition}[Proof steps]\label{proofsteps}
Let $F$ be the set of simple formulas over an \nthree alphabet $A$, $\Gamma\subset F$ a set of formulas and
$f,f_1,f_2, g\in F$. 
A \textit{proof step} is one of the following inference rules:
\begin{enumerate}
 \item \emph{Axiom:} If $f \in \Gamma$ then $\Gamma \vdash f$.
 \item \emph{Conjunction elimination:} If $\Gamma \vdash f_1f_2$ then $\Gamma \vdash f_1$ and $\Gamma \vdash f_2$.
 \item \emph{Conjunction introduction:} Let $\Gamma\vdash f_1$ and $\Gamma \vdash f_2$ and 
 let \[\sigma: V_U\rightarrow V_U\setminus(\text{comp}^1(f_1)\cup\text{comp}^2(f_1)) 
 \text{ and } 
 \mu:V_E\rightarrow V_E\setminus \text{comp}(f_1)\] be substitutions. Let $f_2'= f_2\sigma^t\mu^c$ 
 %
 %
then \[\Gamma \vdash f_1f_2'\]
 \item \emph{Generalized modus ponens:} If $\Gamma \vdash \verb!{!f_1\verb!}=>{!f_2\verb!}.!$ and $\Gamma \vdash g$ 
 and there exists a substitution $\sigma:\text{comp}(f_1)\cap V_U \rightarrow E_e$ such that 
 \[
  (\verb!{!f_1\verb!}=>{!f_2\verb!}.!)\sigma^t= (\verb!{!f_1'\verb!}=>{!f_2'\verb!}.!)\text{ and }
 f'_1=g\] 
then $\Gamma \vdash f_2'$.
 \end{enumerate}
\end{definition}

\begin{theorem}[Correctness of proof calculus]\label{correctness}
Let $\Phi$ be a set of \nthree formulas  and $\phi$ a formula over the same \nthree alphabet $A$.  Then the following holds:
\[ \text{If } \Phi \vdash \phi \text{ then } \Phi \models \phi.\]
\end{theorem}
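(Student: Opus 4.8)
The plan is to prove Theorem~\ref{correctness} by induction on the structure of the derivation witnessing $\Phi\vdash\phi$, i.e.\ on the number of proof steps of Definition~\ref{proofsteps} used to obtain it. It suffices to establish that each of the four inference rules is \emph{sound}: whenever every model of $\Phi$ satisfies the premises of the rule, every model of $\Phi$ also satisfies its conclusion. So I fix an arbitrary interpretation $\mathfrak{I}$ with $\mathfrak{I}\models\Phi$ (a model of every formula of $\Phi$), assume inductively that $\mathfrak{I}$ satisfies everything derived so far, and check the last rule applied. Throughout, ``formula'' means \emph{simple} formula, as in Definition~\ref{proofsteps}, so that the layered evaluation of Definition~\ref{sem_n3} applies: first eliminate the top-level universals by a total substitution into $E_e$ (clause~\ref{quant1}), then the direct existential components by a component substitution into $E_g$ (clause~\ref{quant2}), and finally fall back to the atomic, conjunction, and implication base cases.

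Three of the cases are routine. \emph{Axiom} is immediate from the definition of a model. For \emph{conjunction elimination}, given $\mathfrak{I}\models f_1f_2$, I would peel off the semantics of the conjunction: for an arbitrary $\sigma\colon V_U\to E_e$ one gets $\mathfrak{I}\models(f_1\sigma^t)(f_2\sigma^t)$; then a ground witness $\mu$ for its direct existential components, with $\mathfrak{I}\models((f_1\sigma^t)\mu^c)((f_2\sigma^t)\mu^c)$ since component application distributes over the conjunction; then clause~\ref{conj} of Definition~\ref{sem_n3} gives $\mathfrak{I}\models(f_i\sigma^t)\mu^c$ for $i=1,2$. Since $\sigma$ was arbitrary and the relevant part of $\mu$ restricts to a witness for the existentials of $f_i$ alone, this reassembles to $\mathfrak{I}\models f_1$ and $\mathfrak{I}\models f_2$ (note this is merely a weakening: the two conjuncts need no longer share a common existential witness, which is sound). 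For \emph{generalized modus ponens}, the premises give $\mathfrak{I}\models g$ with $g=f_1\sigma^t$ and a conclusion $f_2'=f_2\sigma^t$, together with $\mathfrak{I}$ satisfying the implication from $f_1$ to $f_2$; unfolding clause~\ref{implication} (and clause~\ref{quant1} for its universals), for every $\rho\colon V_U\to E_e$ one has $\mathfrak{I}\models f_1\rho^t\Rightarrow\mathfrak{I}\models f_2\rho^t$. Taking any $\rho$ that extends $\sigma$ on the universals occurring only in $f_2$, one gets $f_1\rho^t=f_1\sigma^t=g$, hence $\mathfrak{I}\models f_2\rho^t$; letting $\rho$ range over all such extensions yields $\mathfrak{I}\models f_2\sigma^t=f_2'$.

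The real work is \emph{conjunction introduction}, where $f_2'=f_2\sigma^t\mu^c$ renames the top-level universals and the direct existential components of $f_2$ to variables disjoint from those of $f_1$. First I would isolate a \textbf{renaming lemma}: if $\sigma$ is injective on universals and $\mu$ injective on direct existential components, then $\mathfrak{I}\models f_2\iff\mathfrak{I}\models f_2'$; this follows by induction on formula structure from Definition~\ref{sem_n3}, since $\sigma^t$ merely alpha-renames the top-level universal quantifier and $\mu^c$ the existential quantifier of the immediately enclosing formula expression. Given this, $\mathfrak{I}\models f_1$ and $\mathfrak{I}\models f_2$ give $\mathfrak{I}\models f_1$ and $\mathfrak{I}\models f_2'$, and because $f_1$ and $f_2'$ now share no top-level universals and no direct existential components, the semantics of $f_1f_2'$ factors: the universals split as a disjoint union so $(f_1f_2')\tau^t=(f_1\tau^t)(f_2'\tau^t)$ for each $\tau$, and likewise the direct existential components split, so one may choose witnesses independently for the two conjuncts and take their disjoint union; clause~\ref{conj} then delivers $\mathfrak{I}\models f_1f_2'$.

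The main obstacle I anticipate is the bookkeeping imposed by \nthree's two-tier scoping --- universals quantified at the top level and handled by \emph{total} application, existentials quantified at the enclosing formula expression and handled by \emph{component} application --- which the proof steps exploit deliberately. Concretely, the delicate points are: (i)~verifying that the clause-by-clause evaluation of Definition~\ref{sem_n3} commutes with splitting (conjunction elimination) and joining (conjunction introduction) a conjunction, in particular that a single ground substitution for the existentials of a conjunction restricts, respectively extends, to correct witnesses for the conjuncts; and (ii)~pinning down --- and justifying via the renaming lemma --- the implicit requirement that the substitutions $\sigma,\mu$ used in conjunction introduction are injective renamings onto fresh variables, since a non-injective $\sigma$ that merged two distinct universals of $f_2$ would not be truth-preserving and the rule, read literally, would be unsound. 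Once these points are settled, the four case checks assemble into the induction.
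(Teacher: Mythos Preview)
Your overall plan matches the paper's proof: the paper also proceeds by checking that each of the four proof steps is sound, with essentially no further structure. It is in fact much terser than your outline---for conjunction introduction it simply asserts that ``the renaming substitutions $\sigma$ and $\mu$ do not change the meaning of a formula'' and that $\mathfrak{I}\models f_1f_2'$ then ``immediately follows''; for generalised modus ponens it just cites two clauses of Definition~\ref{sem_n3}. Your explicit factorisation via disjoint variable sets and your handling of extensions $\rho\supseteq\sigma$ in the modus ponens case are genuine elaborations of what the paper leaves implicit.

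One point in your analysis is off, though. Your concern that a non-injective $\sigma$ merging two universals of $f_2$ would make conjunction introduction unsound is misplaced: identifying top-level universals can only \emph{weaken} a formula. If $\mathfrak{I}\models f_2$, i.e.\ $\mathfrak{I}\models f_2\tau^t$ for every $\tau\colon V_U\to E_e$, then in particular this holds for those $\tau$ that send the two variables to the same expression, which is exactly what evaluating $f_2\sigma^t$ requires; hence $\mathfrak{I}\models f_2\sigma^t$ regardless of whether $\sigma$ is injective. (Your renaming lemma as a biconditional fails here, but only the forward implication is needed for soundness.) The genuine danger you are sensing sits with $\mu$: collapsing two \emph{direct existential components} of $f_2$ \emph{strengthens} the formula, since it forces a common witness, and then $\mathfrak{I}\models f_2$ need not yield $\mathfrak{I}\models f_2\mu^c$. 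The paper's phrase ``renaming substitutions'' tacitly reads the rule as requiring $\mu$ to be injective on the existentials it actually moves; you should make that reading explicit, but locate the injectivity constraint on $\mu$, not on $\sigma$.
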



\begin{proof}

We prove that every proof step is correct.
\begin{enumerate}
\leftskip2.5em 
 \item \emph{Axiom:} For the axiom step the claim is trivial, as it corresponds to Definition~\ref{log_impl}.
 \item \emph{Conjunction elimination:}
 Let $\Phi\models f_1 f_2 $ and let  $\mathfrak{I}=(\mathcal{D},\mathfrak{a,p})$ be a model for $\Phi$ and $f_1 f_2$. 
 If $f_1 f_2$ is universal free and $\text{comp}(f_1f_2)\cap V_E=\emptyset$, the claim follows immediately 
 from Definition  \ref{sem_n3}.\ref{conj}. \\
 If $f_1f_2$ universal free and $\text{comp}(f_1f_2)\neq \emptyset$ let $\mu:\text{comp}(f_1f_2)\rightarrow E_g$ be a substitution such that
 $\mathfrak{I}\models (f_1f_2)\mu^c$. Then $\mathfrak{I}\models (f_1\mu^c) (f_2\mu^c)$, thus  $\mathfrak{I}\models (f_1\mu^c)$ and 
 $\mathfrak{I}\models (f_2\mu^c)$.\\
 If $f_1f_2$ are not universal free, then $\mathfrak{I}\models (f_1f_2)\sigma^t$ for all substitutions $\sigma:V_U\rightarrow E_e$. 
 The claim follows by the same 
 argument as above.
 \item \emph{Conjunction introduction:} 
 Let $\Phi\models f_1$, $\Phi\models f_2$ and let  $\mathfrak{I}=(\mathcal{D},\mathfrak{a,p})$ be a model for $\Phi$, $f_1$ and $f_2$. 
As the renaming substitutions $\sigma$ and $\mu$ do not change the meaning of a formula, for  $f_2'= f_2\sigma^t\mu^c$ the following holds:
$\mathfrak{I}\models f_2'$. 
It immediately follows that $\mathfrak{I}\models f_1f_2'$. 
\item \emph{Generalized modus ponens:} the claim follows directly from Definitions \ref{sem_n3}.\ref{quant1} 
and~\ref{sem_n3}.\ref{fal2}.
\end{enumerate}
\end{proof}

Applied on a API composition problem, we get the following consequence:
\begin{corollary}[Correctness of API composition proofs]
Let $(H,g,R,B)$ be an API composition problem and $g'$ an instance  of $g$ then the following holds:
\[\text{If }H\cup R \cup B \vdash g' \text{ then }H\cup R \cup B \models g'\]
\end{corollary}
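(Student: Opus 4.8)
The plan is to obtain the corollary as a direct instantiation of Theorem~\ref{correctness}, so the work reduces to checking that the ingredients of an API composition problem satisfy that theorem's hypotheses. First I would unfold Definition~\ref{apicomp}: an API composition problem $(H,g,R,B)$ is built entirely from simple \nthree formulas over a single fixed alphabet~$A$. Concretely, the initial state $H\subset F_g$ is a set of ground formulas over~$A$; each element of $R$ is a \restdesc description, hence (by definition) a simple \nthree formula, or a conjunction of such, again over~$A$; and the background knowledge $B$ consists of ground formulas and implications $e_1\verb!=>!e_2.$ without existential variables, all over~$A$. Consequently $\Phi := H\cup R\cup B$ is a legitimate set of \nthree formulas over the alphabet~$A$, exactly as required by Theorem~\ref{correctness}.

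Second, I would note that $g'$, being an instance of the goal~$g$, is itself a formula over the same alphabet~$A$: an instance is obtained by applying a substitution, which maps variables to expressions over~$A$ and introduces no new symbols, so $g'$ stays within~$A$. Hence the pair $(\Phi,g') = (H\cup R\cup B,\ g')$ meets the preconditions of Theorem~\ref{correctness} — namely, a set of \nthree formulas and a formula over one common \nthree alphabet.

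Third, I would simply apply Theorem~\ref{correctness} with $\Phi = H\cup R\cup B$ and $\phi = g'$. This gives: if $H\cup R\cup B \vdash g'$ then $H\cup R\cup B \models g'$, which is precisely the statement of the corollary. No additional reasoning is needed, since the corollary is literally a special case of the theorem specialized to the three-part knowledge base arising from a composition problem.

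The only step that calls for any care — and the closest thing to an obstacle — is the bookkeeping in the first paragraph: verifying that every syntactic component ($H$, the \restdesc rules in $R$, and the ground/implication formulas in $B$, as well as $g'$) genuinely lives in the one alphabet~$A$ and is a simple \nthree formula, so that Theorem~\ref{correctness} applies verbatim. This is immediate from Definition~\ref{apicomp} together with the definition of \restdesc descriptions, but it is worth stating explicitly because the theorem is phrased for formulas over a common alphabet, and the corollary's force comes entirely from recognizing that a composition problem supplies exactly such a setting.
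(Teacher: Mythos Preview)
Your proposal is correct and matches the paper's approach: the paper does not even supply a separate proof of the corollary, introducing it with the single sentence ``Applied on an API composition problem, we get the following consequence,'' i.e., treating it as an immediate instantiation of Theorem~\ref{correctness}. Your write-up simply makes explicit the bookkeeping (that $H\cup R\cup B$ and $g'$ form a set of \nthree formulas over a common alphabet) that the paper leaves implicit.
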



%

We will examine the generalized modus ponens in more detail,
as this is the proof step where implication rules, 
such as \restdesc descriptions, are applied.

\label{sec:Reasoning}
\begin{lemma}\label{lemma:Reasoning}
Let $A$ be an \nthree alphabet, 
$f\in F_g$ a simple ground formula and $\verb!{!f_1\verb!}=>{!f_2\verb!}!\in F$ a simple implication formula 
where all 
universal variables which occur in $f_2$ 
also occur in $f_1$. If the generalized modus ponens is applicable to $f$ and $\verb!{!f_1\verb!}=>{!f_2\verb!}!$ then
the resulting formula does not contain universal variables.
\end{lemma}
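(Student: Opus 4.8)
The plan is to unfold the definition of the generalized modus ponens (item~4 of Definition~\ref{proofsteps}) and keep track of where universal variables can possibly survive a rule application. If that step is applicable to $f$ and $\{f_1\}\Rightarrow\{f_2\}$, then by definition there is a substitution $\sigma\colon \operatorname{comp}(f_1)\cap V_U \to E_e$ with $(\{f_1\}\Rightarrow\{f_2\})\sigma^t = \{f_1'\}\Rightarrow\{f_2'\}$, with $f_1' = f$, and the resulting formula is $f_2' = f_2\sigma^t$. Since $f$ is ground this is consistent, and the whole claim reduces to showing that $f_2\sigma^t$ contains no universal variable. The argument will rest on two observations: that the domain of $\sigma$ already covers all universal variables of $f_2$, and that $\sigma$ can only replace universal variables by universal-free expressions.

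The first and central step uses simplicity. Because $\{f_1\}\Rightarrow\{f_2\}$ is a \emph{simple} formula, it has no components of level $n>2$; hence neither $f_1$ nor $f_2$ may contain a formula expression that itself has a component, since such a configuration would already create a level-$3$ component. As the component function additionally looks through lists, this means that every variable occurring anywhere in $f_1$ is an element of $\operatorname{comp}(f_1)$, and likewise every variable occurring in $f_2$ is an element of $\operatorname{comp}(f_2)$; in particular, for these formulas ``occurs in'' and ``is a component of'' coincide. Combining this with the hypothesis of the lemma --- every universal variable occurring in $f_2$ also occurs in $f_1$ --- we obtain that every universal variable of $f_2$ lies in $\operatorname{comp}(f_1)\cap V_U$, i.e., in the domain of $\sigma$.

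The second step simply reads off the conclusion from the definition of the total application $\sigma^t$: forming $f_2\sigma^t$ replaces every direct or nested occurrence of each $v_i\in\operatorname{dom}(\sigma)$ by the corresponding $e_i\in E_e$. On the one hand $E_e$ consists of universal-free expressions, so no new universal variable is introduced; on the other hand, by the first step \emph{every} universal variable of $f_2$ is among the $v_i$, so every such occurrence is actually rewritten. Hence $f_2' = f_2\sigma^t$ has no universal variables, which is exactly the assertion. The only delicate point --- the ``main obstacle'', to the extent there is one --- is precisely the first step: correctly extracting from $\operatorname{comp}^n=\emptyset$ for $n>2$ that no universal variable of $f_1$ or $f_2$ can hide inside a nested formula expression and thereby escape $\operatorname{comp}(f_1)$, hence escape $\operatorname{dom}(\sigma)$. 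Once that scoping fact is nailed down, the remainder is a direct, mechanical unwinding of the definitions of generalized modus ponens and of total substitution.
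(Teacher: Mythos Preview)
Your proof is correct and follows essentially the same line as the paper's: both unwind the generalized modus ponens, use simplicity of the implication to conclude that every universal variable of $f_2$ already appears in $\operatorname{comp}(f_1)\cap V_U=\operatorname{dom}(\sigma)$, and then observe that total application by a substitution into $E_e$ leaves no universals behind. The paper is terser---it writes the scoping condition as $(\operatorname{comp}^1(f_2)\cup\operatorname{comp}^2(f_2))\cap V_U\subset(\operatorname{comp}^1(f_1)\cup\operatorname{comp}^2(f_1))\cap V_U$ rather than spelling out, as you do, why simplicity forces $\operatorname{comp}^2$ to be empty---and it additionally records that $\operatorname{range}(\sigma)$ lands in $E_g$ (not merely $E_e$) because $f$ is ground; that stronger fact is not needed for the lemma itself but is what drives the subsequent corollary about sufficiently specified \http requests.
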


\begin{proof}
 Let $\sigma:V_U\rightarrow E_e$ be a substitution 
 such that  $(\verb!{!f_1\verb!}=>{!f_2\verb!}.!)\sigma^t= \verb!{!f\verb!}=>{!f_2'\verb!}.!$ 
As $f$ is a ground formula, $\text{range}(\sigma|_{V_U\cap(\text{comp}^1(f_1)\cup \text{comp}^2(f_1))}) \subset E_g$. 
Due to the condition that every universal variable of $f_2$ 
is also in $f_1$, i.e. 
\[((\text{comp}^1(f_2)\cup \text{comp}^2(f_2))\cap V_U )\subset (\text{comp}^1(f_1)\cup \text{comp}^2(f_1))\cap V_U)\]
the claim follows. 
%
\end{proof}

As \http requests in \restdesc descriptions only contain
one leading existential to represent the \http message, and 
\restdesc descriptions 
fulfill the conditions of Lemma~\ref{lemma:Reasoning}
we arrive at the following consequence:

\begin{corollary}
\label{corollary}
Every application of a ~restdesc description to a ground formula results in a~sufficiently specified \http request 
and a postcondition which does not contain any universal variables.
\end{corollary}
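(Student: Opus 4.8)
The plan is to derive \Cref{corollary} from \Cref{lemma:Reasoning} together with a look at the syntactic shape of an \http request description. First I would unfold the definitions. A \restdesc description has the form \verb!{ precondition } => { http-request postcondition }!; write $f_1$ for the precondition, $r$ for the \http request description and $p$ for the postcondition, so that the head of the rule is the conjunction $f_2 = r\,p$. By the three conditions in the definition of a \restdesc description, $f_1$ contains no existential variables, and every universal variable occurring in $r$ or in $p$ already occurs in $f_1$; the description is moreover a simple formula. ``Applying the description to a ground formula $f\in F_g$'' means carrying out the generalized modus ponens step of Definition~\ref{proofsteps} with $g:=f$: it produces a substitution $\sigma:\operatorname{comp}(f_1)\cap V_U\to E_e$ with $f_1\sigma^t=f$ and yields the conclusion $f_2\sigma^t=(r\sigma^t)(p\sigma^t)$.

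For the statement about the postcondition I would simply invoke \Cref{lemma:Reasoning}. Its hypotheses are exactly met: $f$ is a simple ground formula, the description is a simple implication, and every universal variable of $f_2$ occurs in $f_1$. Hence $f_2\sigma^t$ contains no universal variables, and in particular neither does its sub-conjunction $p\sigma^t$, the instantiated postcondition. Its existential variables survive, since $\operatorname{dom}(\sigma)\subseteq V_U$; these are precisely the placeholders to be filled in by the execution of the request, as discussed after the definition of a \restdesc description.

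For the statement about the \http request I would argue from Definition~\ref{def:HttpRequestDescription}. There $r$ is a conjunction of atomic triples all sharing the existential subject $\verb!_:x!$, whose predicates are \http predicates (hence \uris), and where the object of every triple whose predicate is not \verb!http:resp! lies in $V_U\cup U\cup L$. Under $\sigma^t$ the subject $\verb!_:x!$ is untouched (it is existential), and the \verb!http:methodName! and \verb!http:requestURI! triples are preserved, so $r\sigma^t$ is again an \http request description. It remains to check that $r\sigma^t$ is \emph{sufficiently specified}, i.e.\ that each such object has become a ground expression: a \uri or literal is already in $E_g$ and is fixed by $\sigma^t$; a universal-variable object $v$ occurs in $r$, hence in $f_1$, and---since the description is simple, so $v$ sits at the top level of $f_1$---in $\operatorname{comp}(f_1)\cap V_U=\operatorname{dom}(\sigma)$, whence $\sigma(v)\in E_g$ because $f_1\sigma^t=f$ is ground. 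So every relevant object of $r\sigma^t$ is ground and $r\sigma^t$ is sufficiently specified, which finishes the proof.

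I expect the main obstacle to be the bookkeeping in this last step: pinning down that the universal variables occurring as objects inside $r$ are exactly the ones on which $\sigma$ is defined (this uses that \restdesc descriptions are \emph{simple}, so these universals are top-level components of $f_1$ rather than buried inside a nested formula expression), and noting that groundness of $f$ upgrades the guarantee ``$\sigma$ takes values in $E_e$'' to ``$\sigma$ takes values in $E_g$'' on the relevant variables. Everything else is a direct appeal to \Cref{lemma:Reasoning} and to the definitions of \restdesc descriptions and \http request descriptions.
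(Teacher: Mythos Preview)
Your proposal is correct and follows the same approach as the paper: invoke \Cref{lemma:Reasoning} (whose hypotheses are met by the definition of a \restdesc description) to eliminate universals from the head, and then observe from Definition~\ref{def:HttpRequestDescription} that the only non-ground pieces of the \http request besides the leading existential subject were universal variables, which have now been grounded. The paper in fact does not give a separate proof of the corollary at all---it merely remarks, in the sentence preceding the statement, that \http requests in \restdesc descriptions ``only contain one leading existential'' and that such descriptions ``fulfill the conditions of Lemma~\ref{lemma:Reasoning}''; your write-up simply makes this reasoning explicit, including the bookkeeping about $\sigma$ being defined on the relevant variables and taking ground values.
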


The first step of Definition \ref{proofsteps} includes from a technical point of view also the parsing of a source.
In the \nthree proof vocabulary%
\footnote{The vocabulary's RDF definition can be found at: \url{http://www.w3.org/2000/10/swap/reason\#}.}
we will discuss next,
this step is therefore named after this action.

\begin{definition}[Proof vocabulary]
Let $A$ be an \nthree alphabet and $\mathfrak{I}=(\mathcal{D}, \mathfrak{a}, \mathfrak{p})$ be an 
interpretation of its formulas. Let $\verb!x!, \verb!y!,  \verb!y!_1, \ldots, \verb!y!_n \in U$ be \nthree representations of 
proof steps and $\verb!z!_1,\verb!z!_2,\verb!z!_3\in U$.
\begin{enumerate}
 \item 
 Proof step types:
 \begin{itemize}
\item $\mathfrak{I}\models \verb!x a r:Proof.!$ iff \verb!x! is the proof step which leads to the proven result.
\item $\mathfrak{I}\models \verb!x a r:Parsing.!$ iff \verb!x! is a parsed axiom. 
\item $\mathfrak{I}\models \verb!x a r:Conjunction.!$ iff \verb!x! is a conjunction introduction.
\item $\mathfrak{I}\models \verb!x a r:Inference.!$ iff \verb!x! is a generalized modus ponens.
\item $\mathfrak{I}\models \verb!x a r:Extraction.!$ iff \verb!x! is a conjunction elimination.
\end{itemize}
 \item 
 Proof predicates:
\begin{itemize}
\item $\mathfrak{I}\models \verb!x r:gives {!f\verb!}!$. iff $f\in F$ is the formula obtained by applying \verb!x!.
\item $\mathfrak{I}\models \verb!x r:source u!$. iff \verb!x! is a parsed axiom and $\verb!u!\in U$ is the \uri of the 
parsed axiom's
source. 
\item $\mathfrak{I}\models \verb!x r:component y!$. iff \verb!x! is a conjunction introduction and $\verb!y!$ is a proof step which gives one of its components.
\item $\mathfrak{I}\models \verb!x r:rule y.!$ iff \verb!x! is a generalized modus ponens and $\verb!y!$ is the proof step which leads to the applied implication.
\item $\mathfrak{I}\models \verb!x r:evidence (y!_1,\ldots,\verb!y!_n\verb!)!$. iff \verb!x! is a generalized modus ponens and $\verb!y!_1,\ldots, \verb!y!_n$ 
are the proof steps which lead to the formulas used for the unification with the antecedent of the implication.
\item $\mathfrak{I}\models \verb!x r:because y!$. iff \verb!x! is a conjunction elimination and $\verb!y!$ is the proof step which yields the to-be-eliminated conjunction.
\end{itemize}
\item
Substitutions:
\begin{itemize}
 \item $\mathfrak{I}\models \verb!x r:binding z!_1.$ iff \verb!x! includes a substitution $\verb!z!_1$.
 \item $\mathfrak{I}\models \verb!z!_1 \verb! r:variable z!_2.$ iff $\verb!z!_1$ is a substitution whose domain is $\{\verb!z!_2\}$.
 \item $\mathfrak{I}\models \verb!z!_1 \verb! r:boundTo z!_3.$ iff $\verb!z!_1$ is a substitution whose range is $\{\verb!z!_3\}$.
\end{itemize}

\end{enumerate}
\end{definition}

To produce a proof for an API composition problem, the reasoner needs to be aware of all formulas at its disposal (in our case $H \cup R \cup B$) and of the goal 
which it is expected to prove.
The latter is given to the reasoner as the consequence of a~\textit{filter rule}
$\verb!{!f\verb!} => {!g\verb!}.!$
This triggers the reasoner to prove an instance of~$f$ and in case of success,
return each provable ground instance of $g$ if possible,
or a provable instance containing existentials otherwise.
For brevity, not all reasoners display every proof step in a proof:
especially conjunction elimination and introduction are often omitted.
However, to the best of our knowledge,
all reasoners' proofs contain all applications of $\verb!r:Inference!$ leading to a goal~$g$,
which allows us to measure a~proof's length by counting applications of the generalized modus ponens.

 \begin{lstlisting}[
  float=t,
  caption={The initial knowledge of the agent \emph{(agent\_knowledge.n3)}},
  label=lst:Knowledge]
§\textcolor{gray}{@prefix dbpedia: <http://dbpedia.org/resource/>.}§

<lena.jpg> a dbpedia:Image.
\end{lstlisting}

\begin{lstlisting}[
  float=t,
  caption={A filter rule expressing the agent's goal \emph{(agent\_goal.n3)}},
  label=lst:Goal]
§\textcolor{gray}{@prefix dbpedia-owl: <http://dbpedia.org/ontology/>.}§

{ <lena.jpg> dbpedia-owl:thumbnail ?thumbnail. }
=>
{ <lena.jpg> dbpedia-owl:thumbnail ?thumbnail. }.
\end{lstlisting}

\begin{lstlisting}[
  float=p,
  caption={Example hypermedia API composition proof},
  label=lst:Proof]
§\textcolor{gray}{@prefix dbpedia: <http://dbpedia.org/resource/>.}§
§\textcolor{gray}{@prefix dbpedia-owl: <http://dbpedia.org/ontology/>.}§
§\textcolor{gray}{@prefix ex: <http://example.org/image\#>.}§
§\textcolor{gray}{@prefix http: <http://www.w3.org/2011/http\#>.}§
§\textcolor{gray}{@prefix n3: <http://www.w3.org/2004/06/rei\#>.}§
§\textcolor{gray}{@prefix r: <http://www.w3.org/2000/10/swap/reason\#>.}§

§\bfseries<\#proof> a r:Proof, r:Conjunction;\label{ln:Proof}§
  r:component <#lemma1>;
  r:gives { <lena.jpg> dbpedia-owl:thumbnail _:sk3. }.

§\bfseries<\#lemma1> a r:Inference;\label{ln:Lemma1}§
  r:gives { <lena.jpg> dbpedia-owl:thumbnail _:sk3. };
  r:evidence (<#lemma2>);
  r:binding [ r:variable [ n3:uri "var#x0"]; r:boundTo [ n3:nodeId "_:sk3"]];
  r:rule <#lemma7>.
  
§\bfseries<\#lemma2> a r:Inference;\label{ln:Lemma2}§
  r:gives { _:sk4 http:methodName "GET".
            _:sk4 http:requestURI _:sk3.
            _:sk4 http:resp _:sk5.
            _:sk5 http:body _:sk3.
            <lena.jpg> dbpedia-owl:thumbnail _:sk3.
            _:sk3 a dbpedia:Image.
            _:sk3 dbpedia-owl:height 80.0. };
  r:evidence (<#lemma3>);
  r:binding [ r:variable [ n3:uri "var#x0"]; r:boundTo [ n3:uri "lena.jpg"]];
  r:binding [ r:variable [ n3:uri "var#x1"]; r:boundTo [ n3:nodeId "_:sk3"]];
  r:binding [ r:variable [ n3:uri "var#x2"]; r:boundTo [ n3:nodeId "_:sk4"]];
  r:binding [ r:variable [ n3:uri "var#x3"]; r:boundTo [ n3:nodeId "_:sk5"]];
  r:rule <#lemma4>.
  
§\bfseries<\#lemma3> a r:Inference;\label{ln:Lemma3}§
  r:gives { _:sk0 http:methodName "POST".
            _:sk0 http:requestURI "/images/".
            _:sk0 http:body <lena.jpg>.
            _:sk0 http:resp _:sk1.
            _:sk1 http:body <lena.jpg>.
            <lena.jpg> ex:comments _:sk2.
            <lena.jpg> ex:smallThumbnail _:sk3. };
  r:binding [ r:variable [ n3:uri "var#x0"]; r:boundTo [ n3:uri "lena.jpg"]];
  r:binding [ r:variable [ n3:uri "var#x1"]; r:boundTo [ n3:nodeId "_:sk0"]];
  r:binding [ r:variable [ n3:uri "var#x2"]; r:boundTo [ n3:nodeId "_:sk1"]];
  r:binding [ r:variable [ n3:uri "var#x3"]; r:boundTo [ n3:nodeId "_:sk2"]];
  r:binding [ r:variable [ n3:uri "var#x4"]; r:boundTo [ n3:nodeId "_:sk3"]];   
  r:evidence (<#lemma6>);
  r:rule <#lemma5>.
  
§\bfseries<\#lemma4> a r:Extraction;§§\label{ln:Lemma4}§ r:because [ a r:Parsing; r:source <desc_thumbnail> ].
§\bfseries<\#lemma5> a r:Extraction;§§\label{ln:Lemma5}§ r:because [ a r:Parsing; r:source <desc_images> ].
§\bfseries<\#lemma6> a r:Extraction;§§\label{ln:Lemma6}§ r:because [ a r:Parsing; r:source <agent_knowledge> ].
§\bfseries<\#lemma7> a r:Extraction;§§\label{ln:Lemma7}§ r:because [ a r:Parsing; r:source <agent_goal> ].
\end{lstlisting}

\newcommand\lineref[1]{[\textit{line~\ref{#1}}]}
\newcommand\linesref[2]{[\textit{lines~\ref{#1}--\ref{#2}}]}


\subsection{Discussion of a~Composition Proof Example}\label{example}
\paragraph{\bfseries Overview}
\Cref{lst:Proof} displays an example of a hypermedia API composition proof.
In addition to the two \restdesc descriptions from \cref{lst:Thumbnail,lst:Images}
($R$ in an API composition problem),
it involves the goal file \cref{lst:Goal} (\texttt{\{g\} => \{g\}.})
and the input file \cref{lst:Knowledge} (the initial knowledge $H$).
The proof, generated by EYE~\cite{eyepaper},
explains how, given an image, a~thumbnail of this image can be obtained.
Its main element is \verb!r:Proof!~\lineref{ln:Proof},
which is a~conjunction of different components, indicated by \verb!r:component!.
In this example, there is only one,
but there can be multiple.
The conclusion of the proof
(object of the \verb!r:gives! relation)
is the triple
\verb!<lena.jpg> dbpedia-owl:thumbnail _:sk3 !%
(with \verb!_:sk3! an existential variable).
This captures the fact that \verb!lena.jpg! has a~certain thumbnail,
which is the consequent of the filter rule in \cref{lst:Goal}.

\vspace{-1em}

\paragraph{\bfseries Extractions and inferences}
EYE represents \verb!r:extraction!s and \verb!r:inference!s as lemmata,
which can be reused in different proof steps.
The extractions, Lemmata~4 to~7~\linesref{ln:Lemma4}{ln:Lemma7}, are fairly trivial as none of the underlying formulas is a real conjunction and will therefore not be discussed in detail.
In this proof, they just give the formulas specified in \cref{lst:Thumbnail} to 4 with renamed variables.
The inferences describe the actual reasoning carried out
and thus merit a~closer inspection.
We will follow the path backwards from the proof's conclusion,
tracing back inferences until we arrive at the atomic facts
that are the starting point of the proof.

\vspace{-1em}

\paragraph{\bfseries Filter rule}
The justification for the conclusion
consists in this case of a~single component, namely Lemma~1~\lineref{ln:Lemma1}.
This lemma is an application of the modus ponens using the implication of Lemma~7,
which is the file \verb!agent_goal.n3! shown in \cref{lst:Goal}.
This rule has been instantiated 
according to the variable bindings indicated by \verb!r:binding!:
here, the variable \verb!?thumbnail! (\verb!var#x0!)
is bound to the existential variable \verb!_:sk3!.
The unification uses a part of the result of Lemma~2 as indicated by \verb!r:evidence!, the performed conjunction elimination is not listed in this proof.
\sloppy Substituting \verb!_:sk3! for \verb!?thumbnail! in \cref{lst:Goal}
indeed gives the desired conclusion
\verb!<lena.jpg> dbpedia-owl:thumbnail _:sk3!,
which contributes to the final~result.

\vspace{-1em}

\paragraph{\bfseries Thumbnail operation}
Lemma~2~\lineref{ln:Lemma2} is another \verb!r:inference!,
this time applying the rule from Lemma~4,
which is the \restdesc description in \cref{lst:Thumbnail}
that explains what happens when an image is uploaded. 
The instantiation is again detailed with \verb!r:binding! statements.
The \verb!?image! variable (\verb!var#x0!) is bound to \verb!lena.jpg!,
the \verb!?thumbnail! variable (\verb!var#1!) to the existential \verb!_:sk3!.
This is only possible because a~ statement
\verb!<lena.jpg> ex:smallThumbnail _:sk3! is proven; 
its derivation will be detailed shortly.
The other variables \verb!var#x2! and \verb!var#x3!
refer to the request and response resources in the consequent of \cref{lst:Thumbnail}
and are instantiated with new existentials (\verb!_:sk4! and \verb!_:sk5!).
This lemma is in turn made possible by another one.

\vspace{-1em}

\paragraph{\bfseries Upload operation}
Lemma~3~\lineref{ln:Lemma3} explains the derivation of the triple
that is a~necessary condition for Lemma~2:
\verb!<lena.jpg> ex:smallThumbnail _:sk3!.
The rule used for the inference is that
from the upload action in \cref{lst:Images} (Lemma~5),
instantiated with the initial knowledge of the agent
that it has access to an image (\cref{lst:Knowledge} / Lemma~6).
Substituting this knowledge into the rule
by binding \verb!?image! (\verb!var#x0!)
 to \verb!lena.jpg!,
gives the triples of the instantiated consequent,
as shown by the \verb!r:gives! predicate.
Note in particular the newly created existential \verb!_:sk3!;
this entails the triple \verb!<lena.jpg> ex:smallThumbnail _:sk3!
which is needed for Lemma~2.
Lemma~3 itself is justified by Lemma~6 (\cref{lst:Knowledge}),
which is a~simple extraction that stands on itself.
Hence, we have reached the starting proof's starting~point.

\vspace{-1em}

\paragraph{\bfseries From start to end}
As a~summary, we will briefly follow the proof in a~forward way.
Extracted from the background knowledge in \cref{lst:Knowledge},
the triple \verb!<lena.jpg> a dbpedia:Image!
triggers the image upload rule from \cref{lst:Thumbnail},
which yields \verb!<lena.jpg> dbpedia-owl:thumbnail _:sk3!
for some existential variable \verb!_:sk3!.
This triple in turn triggers the thumbnail rule from \cref{lst:Thumbnail},
which yields \verb!<lena.jpg> ex:smallThumbnail _:sk3!.
Finally, this is the input for the filter rule from \cref{lst:Goal},
which yields the final result of the proof:
the image has some thumbnail.
So given the background knowledge
and the two hypermedia API descriptions,
this proof verifies that the goal follows from them.

\subsection{Hypermedia API Operations inside a~Proof}
The proof in \cref{lst:Proof} is special
in the sense that some of its implication rules,
namely \cref{lst:Thumbnail,lst:Images},
are actually hypermedia API descriptions.
That means they do not fulfill an actual ontological implication.
Instead, they convey dynamic information.
Therefore, those steps in the proof
can be interpreted as \http requests that should be performed
in order to achieve the desired result.
This proof is indeed a~pre-proof:
it is valid under the assumption
that the described \http requests will behave as expected,
which can never be guaranteed on an environment such as the Internet.
The instantiation of a~hypermedia API description
turns it into the description of a~concrete API operation.
For instance, Lemma~3~\lineref{ln:Lemma3} contains the following operation:
\begin{Verbatim}
    _:sk0 http:methodName "POST".
    _:sk0 http:requestURI "/images/".
    _:sk0 http:body <lena.jpg>.
    _:sk0 http:resp _:sk1.
    _:sk1 http:body <lena.jpg>.
    <lena.jpg> ex:comments _:sk2.
    <lena.jpg> ex:smallThumbnail _:sk3.
\end{Verbatim}
This instructs an agent to perform
an \http \verb!POST! request (\verb!_:sk0!) to \verb!/images/!
with \verb!lena.jpg! as the request body.
This request will return a~response (\verb!_:sk1!)
with a~representation of \verb!lena.jpg!,
which will contain \verb!ex:comments! and \verb!ex:smallThumbnail! links.
Since this is a~hypermedia API,
the link targets are not yet known at this stage.
Therefore, they are represented by generated existential variables \verb!_:sk2! and \verb!_:sk3!.
At runtime, the \http request will return the actual link targets,
but at the pre-proof stage,
it suffices to know that some target for those links will exist.

Indeed, the outcome of this rule---the fact that \emph{some} \verb!smallThumbnail! links will exist---%
serves as input for the next API operation in Lemma~2 \lineref{ln:Lemma2}.
The consequent of this rule is instantiated as follows:
\begin{Verbatim}
    _:sk4 http:methodName "GET".
    _:sk4 http:requestURI _:sk3.
    _:sk4 http:resp _:sk5.
    _:sk5 http:body _:sk3.
    <lena.jpg> dbpedia-owl:thumbnail _:sk3.
    _:sk3 a dbpedia:Image.
    _:sk3 dbpedia-owl:height 80.0
\end{Verbatim}
This describes a~\verb!GET! request (\verb!_:sk4!) to the URL \verb!_:sk3!,
which will return a~representation of a~thumbnail that is 80~pixels high.
This request is interesting because it is \emph{incomplete}:
\verb!_:sk3! is not a~concrete URL that can be filled in. 
However, this identifier is the same variable as the one in Lemma~3,
so this description essentially states
that whatever will be the target of the \verb!smallThumbnail! link in the previous \verb!POST! request
should be the URL of the present \verb!GET! request.
The existential variables thus serve as placeholders
for values that will be the result of actual API operations.

While the proof above is a~pre-proof,
a~proper post-proof can be obtained by actually executing the \verb!POST! \http request,
which has all values necessary for execution
(as opposed to the \verb!GET! request where the URL is still undetermined).
This execution will result in a~concrete value
for the \verb!comments! and \verb!smallThumbnail! link placeholders \verb!_:sk2! and \verb!_:sk3!.
They lead to a~proper post-proof that uses these concrete values,
and hence that proof does not need the assumption that the \verb!POST! request will execute successfully
(because evidence shows it did).
\Cref{fig:UmlDiagram} shows the UML sequence diagram
of an example interaction between the client and the server.

\begin{figure}
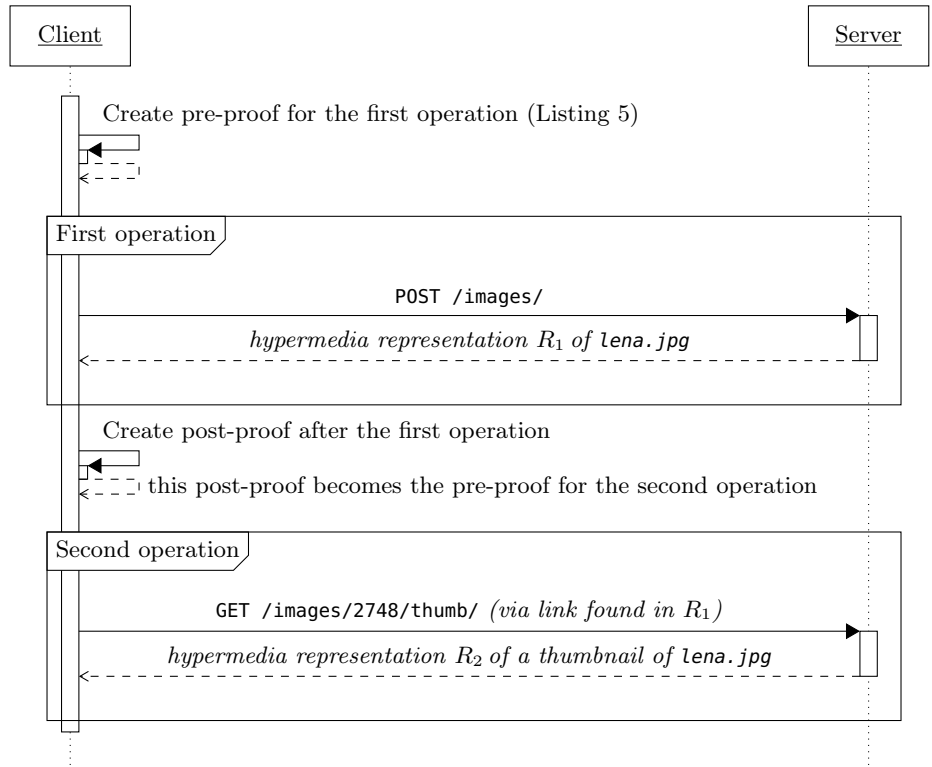

  \begin{sequencediagram}
    \newthread[white]{c}{Client}
    \newinst[9]{s}{Server}

    \begin{callself}{c}{Create pre-proof for the first operation (\cref{lst:Proof})}{}\end{callself}

    \begin{sdblock}{First operation}{}
      \begin{call}{c}{\texttt{POST /images/}}{s}
        {\textit{hypermedia representation~$R_1\thinspace$of \texttt{lena.jpg}}}
      \end{call}
    \end{sdblock}

    \begin{callself}{c}{Create post-proof after the first operation}
                       {this post-proof becomes the pre-proof for the second operation}\end{callself}

    \begin{sdblock}{Second operation}{}
      \begin{call}{c}{\texttt{GET /images/2748/thumb/} \textit{(via link found in~$R_1$)}}{s}
        {\textit{hypermedia representation~$R_2\thinspace$of a~thumbnail of \texttt{lena.jpg}}}
      \end{call}
    \end{sdblock}
  \end{sequencediagram}
  \caption{
    The \emph{a~posteriori} UML sequence diagram of an example execution
    shows how the second operation happened through a~link in the first operation.
  }
  \label{fig:UmlDiagram}
\end{figure}

As stated in its definition,
a~pre-proof implicitly assumes that each API will indeed deliver the functionality
as stated in its \restdesc description.
The proof thus only holds under that assumption.
For example, if a~power outage occurs during the calculation of the aspect ratio,
the placeholder will not be instantiated with an actual value during the execution,
which can pose a~threat to subsequent hypermedia API operations that depend on this value.
However, the failure of a~single API operation
does not necessarily imply the intended result cannot be achieved.
Rather, it means the assumption of the pre-proof was invalid
and an alternative pre-proof---a~new hypermedia API composition---should be created,
starting from the current application state.
Such a~pragmatic approach to proofs containing hypermedia API operations is unavoidable:
no matter how low the probability of a~certain operation to fail,
failures can never be eliminated.
Therefore, pragmatism ensures that planning in advance is possible.
Each proof should be stored along with its assumptions
in order to understand the context it which it can be used.

\section{Hypermedia-driven Composition Generation and Execution}
\label{sec:Composition}
In contrast to fully plan-based methods,
the steps in the composition obtained through reasoner-based composition of hypermedia APIs
are not executed blindly.
Instead, the interaction is driven by the hypermedia responses from the server;
the composition in the proof only serves as \emph{guidance} for the client,
and as a~guarantee (to the extent possible) that the desired goal can be reached.
The composition that starts from the current state
helps an agent decide what its next step towards that goal should be.
Once this step has been taken, the rest of the pre-proof is discarded
because it is based on outdated information.
After the request,
the state is augmented with the information inside the server's response.
This new state becomes the input for a~new pre-proof
that takes into account the actual situation,
instead of the expected (and incomplete) values from the hypermedia API description.
In this section, we will detail this iterative composition generation
and execution.

\subsection{Goal-oriented Composition Generation}
Creating a~composition that satisfies a~goal comes down
to generating a~proof that supports the goal.
Inside this proof, the necessary hypermedia API operations will be incorporated as instantiated rules.
Proof-based composition generation,
unlike other composition techniques,
requires no composition-specific tools or algorithms.
A~generic reasoner that supports the rule language in which the hypermedia APIs are described
is capable of generating a~proof containing the composition.
For example, since \restdesc descriptions are expressed in the \nthree language,
compositions of hypermedia APIs described with \restdesc
can be performed by any \nthree reasoner with proof support.
The fact that proof-based composition can be performed by existing reasoners
is an advantage in itself,
because no new software has to be implemented and tested.
Furthermore, this offers the following benefits.
\begin{description}
\item[Incorporation of external knowledge]
  Existing RDF knowledge can directly be incorporated into the composition process.
  Whereas composition algorithms that are specifically tailored to certain description models
  usually operate on closed worlds,
  generic Semantic Web reasoners are built to incorporate knowledge from various sources.
  For example, existing OWL and RDF ontologies can be used
  to compose hypermedia APIs described with different vocabularies.
\item[Evolution of reasoners]
  Many implementations of reasoners exist
  and they continue to be updated to allow enhanced performance and possibilities.
  The proof-based composition method directly benefits from these innovations.
  This also counters the problem that many single-purpose composition algorithms
  are seldom updated after their creation because they are so specific.
\item[Independent validation]
  When dealing with proof and trust on the Web,
  it is especially important that the validation can happen by an independent party.
  Since different reasoners and validators exist,
  the composition proof can be validated independently.
  This contrasts with other composition approaches,
  whose algorithms have to be trusted.
\end{description}

In order to make a~reasoner generate a~pre-proof of a~composition,
it must be invoked with the initial state, the available hypermedia APIs, and the desired goal.
Here, we will examine the case for \nthree reasoners
and hypermedia APIs described with \restdesc\ \nthree rules,
but the principle of proof-based composition is generalizable
to all families of inference rules.

The hypermedia API descriptions include all those APIs available to the client.
In practice, the number of supplied available hypermedia APIs
would be substantially higher than the number of APIs in the resulting composition.
The background knowledge can, for example, consist of ontologies and business rules.
The reasoner will try to infer the goal state,
asserting the other inputs as part of the ground truth.
The initial state and background knowledge should correspond to reality,
regardless of the results of the actual execution,
provided the descriptions are accurate.
In contrast, the API description rules only hold under the assumption of successful execution,
due to the nature of the pre-proof.


If the reasoner can infer the goal state given the ground truth,
we can conclude that a~composition \emph{exists}.
To obtain the details of the composition,
the reasoner must return the proof of the inference,
i.e., the data and rules applied to achieve the goal.
Inside this proof,
there will be placeholders for return values by the server
that are unknown at design-time.
The proof will be structured as in \cref{lst:Proof},
where the initial state was \cref{lst:Knowledge},
the goal state \cref{lst:Goal},
and the descriptions \cref{lst:Thumbnail,lst:Images}.
No background knowledge was needed,
but it could have been useful for instance
if the image of the initial state was described in different ontology,
in which case the conversion to the \dbpedia ontology would be necessary.

\subsection{Hypermedia-driven Execution}
\label{subsec:Execution}
In order to achieve a~certain goal in a~hypermedia-driven way,
the following process steps can be followed.
\begin{definition}[Pragmatic proof algorithm]
\label{def:PPAlgorithm}
Given an API composition problem with an initial state $H$, goal $g$,
description formulas $R$ and background knowledge $B$,
we define the pragmatic proof algorithm as follows:
\begin{enumerate}
\item
  Start an \nthree reasoner to generate a pre-proof for $(R, g, H, B)$. 
  \begin{enumerate}
  \item If the reasoner is not able to generate a proof, halt with \emph{failure}.
  \item Else scan the pre-proof for applications of rules of $R$, set the number of these applications to $n_{\text{pre}}$.
  \end{enumerate} 
  \item Check $n_{\text{pre}}$:
  \begin{enumerate}
  \item If $n_{\text{pre}}=0$
  , halt with \emph{success}.
  \item Else 
  continue with Step~3.
  \end{enumerate}
%
\item
  Out of the pre-proof, select a sufficiently specified \http request description which is part of 
  the application of a rule $r\in R$.
\item  Execute the described HTTP request
  and parse the (possibly empty) server response to a set of ground formulas $G$. 
\item  Invoke the reasoner with the new 
  API composition problem $(R, g, H\cup G, B)$ to produce a post-proof.
 \item  Determine $n_{\text{post}}$:
 \begin{enumerate}

 \item If the reasoner was not able to generate a~proof,
 set $n_{\text{post}}:=n_{\text{pre}}.$


 \item Else scan the proof for the number of inference steps which are using rules from $R$ and set this number of steps to $n_{\text{post}}$.  
  \end{enumerate}
  \item Compare $n_{\text{post}}$ with $n_{\text{pre}}$:
 \begin{enumerate}
 \item If $n_{\text{post}} \geq n_{\text{pre}}$ 
	go back to Step~1 with the new API composition problem $(R\setminus \{ r\}, g, H, B)$. 
 \item If $n_{\text{post}} < n_{\text{pre}}$, the post-proof
 can be used as the next pre-proof.
       Set $n_{\text{pre}}:=n_{\text{post}}$ and continue with Step~2.
 \end{enumerate}
  
\end{enumerate}
\end{definition}

Before having a more theoretical look at the results of this algorithm,
let us run through a~possible execution
of the composition example introduced previously.

\newcommand\stepref[1]{\emph{(Step~#1)}}
\begin{itemize}
  \item \stepref{1} Given the background knowledge, initial state, and goal,
        the reasoner generates the pre-proof from \cref{lst:Proof},
        which contains $n_{\text{pre}} = 2$ API operations.
  \item \stepref{2} $n_{\text{pre}} \neq 0$, so continue with Step~3.
  \item \stepref{3} The HTTP request to upload the image
        is the only one that is sufficiently instantiated,
        so it is selected.
  \item \stepref{4} Execute the \http request
        by posting the image to \verb!/images/!,
        and retrieve a~hypermedia response.
        Inside this hypermedia response,
        there is a~\verb!comments! link to \verb!/comments/about/images/37!
        and a~\verb!smallThumbnail! link to \verb!/images/37/thumb/!.
        They are added to~$G$.
  \item \stepref{5} A~post-proof is produced from the new state,
        revealing that the goal can now be completed with one API operation.
        Indeed, only an \http \verb!GET! request to \verb!/image/37/thumb! is needed.
  \item \stepref{6} The above means that $n_{\text{post}} = 1$.
  \item \stepref{7} $n_{\text{post}} = 1 < n_{\text{pre}} = 2$,
        so set $n_{\text{pre}} := 1$ and continue with Step~2.
  \item \stepref{2} $n_{\text{pre}} = 1 \neq 0$, so continue with Step~3.
  \item \stepref{3} Select the only remaining HTTP request in the pre-proof.
  \item \stepref{4} Execute the \verb!GET! request to \verb!/image/37/thumb!
        and thereby obtain a~representation of the thumbnail of the image.
  \item \stepref{5} Generate the post-proof;
        it consists entirely of data
        as the necessary information to reach the goal has been obtained.
  \item \stepref{6} No rules of~$R$ are applied, so $n_{\text{post}} = 0$.
  \item \stepref{7} $n_{\text{post}} = 0 < n_{\text{pre}} = 1$,
        so set $n_{\text{pre}} := 0$ and continue with Step~2.
  \item \stepref{2} $n_{\text{pre}} = 0$, so halt with \textit{success}.
\end{itemize}

\noindent
This example shows how the proof guides the process,
but hypermedia drives the interaction.
For instance, the \URL needed for the \verb!GET! request was not hard-coded:
it was obtained as a~hypermedia control from the server.
This means that, even if the server changes its internal \URL structure
or the layout of the representation,
the interaction can still take place.
The client needs the RESTdesc descriptions to find out whether the complex goal is possible
and what first steps it should take.
Otherwise, it would have no way of knowing
that the upload of an image results in a~link to the thumbnail.
However, once this expectation is there,
the client navigates through hypermedia.
We can compare this to driving with a~map:
the map gives the overall picture,
but the actual wayfinding happens based on the actual roads and scenery
when somebody undertakes the journey.

There are several reasons, why the situations in 1(a) or 6(a) can occur:
the reasoner could have a~technical problem, it could detect inconsistencies
(a fulfilled antecedent of a rule with false in the consequent), it could simply be that 
there is no instance $g'$ of $g$ which is a logical consequence of $H \cup R\cup B$, but even
if such a $g'$ exists the reasoning problem is undecidable.
This is because RESTdesc descriptions are rules with new existential variables in the consequence,
which makes the problem in general undecidable, as discussed by \citeN{Baget}.

However, we can show that the following holds:
\begin{theorem}
Given an execution of the algorithm in \cref{def:PPAlgorithm}
that requires $n$~executions of the reasoner (in Steps 1 and~5).
If all $n$~reasoning runs terminate, the algorithm terminates as well.
Furthermore, if the algorithm halts with \emph{success},
its output is a~ground instance of the goal state.

\end{theorem}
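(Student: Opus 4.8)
The plan is to prove the two claims separately, both by induction on the progress of the algorithm, using the number $n$ of reasoner invocations as the natural measure.

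\textbf{Termination.} First I would observe that the algorithm only loops through Step~1 or Step~2 by way of Steps 3--7, and each such loop is governed by the counter $n_{\text{pre}}$. I would argue that between two consecutive reasoner calls the algorithm performs only finitely much work: selecting a sufficiently specified \http request (finitely many candidates in a finite proof), executing a single \http request, and parsing a finite response into a finite set $G$ of ground formulas. So the only way non-termination could arise is (a) a reasoner call that does not terminate---excluded by hypothesis---or (b) an infinite sequence of reasoner calls. To rule out (b), I would set up a well-founded measure on the pair $(|R|, n_{\text{pre}})$ ordered lexicographically. Each return to Step~1 (case 6(a) / 7(a), i.e.\ $n_{\text{post}} \ge n_{\text{pre}}$) strictly decreases $|R|$, since the rule $r$ just exercised is removed: $R \setminus \{r\}$. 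Each return to Step~2 with a genuinely shorter post-proof (case 7(b), $n_{\text{post}} < n_{\text{pre}}$) keeps $R$ fixed but strictly decreases $n_{\text{pre}}$. Since $n_{\text{pre}} \ge 0$ always, and $|R|$ is a fixed finite number, this lexicographic measure cannot decrease infinitely often; hence only finitely many reasoner calls occur, and the algorithm terminates.

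\textbf{Correctness of the output.} For the second claim I would trace the only exit that yields \emph{success}: Step~2(a), reached when $n_{\text{pre}} = 0$. I would argue that whenever the algorithm is poised at Step~2, the ``current'' proof (the pre-proof from the most recent Step~1, or the post-proof promoted in Step~7(b)) is a genuine proof, in the sense of Definition~\ref{proofsteps}, of an instance $g'$ of $g$ from $H \cup R \cup B$ together with whatever execution results have been accumulated. When $n_{\text{pre}} = 0$, this proof contains \emph{no} application of a rule of $R$: by the scanning in Step~1(b) / Step~6, the count of \http-description rules used is zero. Consequently every inference step in the proof uses only background knowledge $B$, the initial state $H$, and the accumulated ground execution results $G$---all of which are, by Definition~\ref{apicomp} and the fact that $G$ is parsed from server responses, \emph{ground} data and Datalog-style rules with no new existentials. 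A proof built from ground facts and such rules, via the proof calculus of Definition~\ref{proofsteps}, produces a ground conclusion: there are no existential placeholders left, because the only source of unfilled existentials was the postconditions of \restdesc descriptions (Corollary~\ref{corollary}), and none of those is used. I would make this precise by an induction over proof steps showing that the \verb!r:gives! formula of each lemma is ground when all the rules and axioms feeding it are ground. Hence the final $g'$ is a ground instance of $g$.

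\textbf{Main obstacle.} The delicate point is the invariant ``the current proof is always a valid proof of an instance of $g$ from the current knowledge base.'' This needs care at the hand-off in Step~7(b), where a post-proof for $(R, g, H \cup G, B)$ is reused as a pre-proof. One must check that reusing it after $G$ has been folded into the state is legitimate---i.e.\ that the post-proof genuinely only refers to formulas in $H \cup G \cup R \cup B$---and that the promoted proof's count of $R$-applications is exactly the $n_{\text{post}}$ just computed, so that the loop re-enters Step~2 with a consistent value of $n_{\text{pre}}$. I would also have to confirm that Step~3 never gets stuck: whenever $n_{\text{pre}} > 0$ there \emph{is} a sufficiently specified \http request in the pre-proof. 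This follows from Corollary~\ref{corollary}---any application of a \restdesc description to a ground formula yields a sufficiently specified request---combined with the observation that the pre-proof starts from the ground initial state $H$, so the first unexecuted \restdesc application in proof order acts on ground data. Establishing this ``there is always something to execute'' lemma cleanly is, I expect, the part requiring the most attention; the termination argument itself is then routine given the lexicographic measure.
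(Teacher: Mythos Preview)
Your proposal is correct and follows essentially the same approach as the paper: both arguments hinge on Corollary~\ref{corollary} (every \restdesc application to ground data yields a sufficiently specified request) to guarantee Step~3 never blocks, on the finiteness of $|R|$ to bound the number of passes through Step~7(a), and on the strict decrease of $n_{\text{pre}}$ to bound the passes through Step~7(b); the groundness claim is handled identically by observing that with no \restdesc rules in play, only ground facts and Datalog-style background rules remain, so Lemma~\ref{lemma:Reasoning} forces every derived atomic formula to be ground. Your explicit lexicographic measure $(|R|, n_{\text{pre}})$ is a cleaner packaging of the termination argument than the paper's informal ``Step~7(a) at most $|R|$ times, Step~7(b) at most $n_{\text{pre}_0}$ times,'' but the content is the same.
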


\begin{proof}
We first show termination:
if the algorithm does not terminate, it especially never reaches the Steps 1(a) and 2(a) and
Steps 1 and 2 always result in option (b). 
All formulas in $H\cup B$ are either ground formulas
or simple rule formulas 
which do not contain existentials 
and which fulfill Lemma~\ref{lemma:Reasoning}.
Therefore no atomic formula or conjunction of atomic formulas
which can be obtained by applying the proof steps from Definition \ref{proofsteps}
on $H\cup B$
contains universals 
or existentials.
If for a pre-proof $n_{\text{pre}}>0$ holds, it must 
by Corollary~\ref{corollary} 
contain at least one
sufficiently specified \http request description. So, Step 3 and the following Steps~4--6 can always be executed.
Step 7(a) reduces the set of \restdesc descriptions, it can only be performed $|R|$ times.
Starting from a fixed pre-proof $\text{pre}_0$, Step~7(b) can only be applied $n_{\text{pre}_0}$ times.
Thus, the algorithm terminates.

As every operation which changes the API composition problem for the pre-proof to be checked in Step 2, preserves the syntactic properties of the 
sets of formulas involved, it is enough to show that the result of every pre-proof of an API composition problem which does not contain 
applications of \restdesc rules is ground. 
This follows by the same arguments as above. As $H\cup B \cup \{ \texttt{\underline{\{}g\underline{\}} => \underline{\{}g\underline{\}}.}\}$ 
only contains ground formulas and rules which fulfill
the conditions of Lemma~\ref{lemma:Reasoning},
it is not possible to derive atomic formulas or conjunctions of atomic formulas which contain existentials or universals 
thus the result of the proof is ground.
\end{proof}

\section{Semi-Automated Description Generation}
\label{sec:Generation}
One of the bottlenecks with traditional description-based methods of Web APIs and services
is that these descriptions have to be created, which is mostly a~manual task.
Consequently, without a~method that facilitates the creation of such descriptions,
the overall concept of description-based composition and execution
might not successfully transition from theory to practice.
This section explains how RESTdesc descriptions can be created
by a~computer-assisted process.

We define \emph{semi-automated RESTdesc description generation}
as a~process that takes as input a~series of HTTP requests and responses
performed by one or more persons,
and provides as output skeletons for RESTdesc descriptions,
which a~user can then further refine to a~final description.
We rely on the fact that RESTdesc descriptions
capture the \emph{expectations} of resources.
For example, \cref{lst:Images} captures the expectation that
the upload of an image will result in links to comments and a~thumbnail.
RESTdesc thus describes the generic hypermedia controls
that will be available on such image resources.

\begin{lstlisting}[
  float=t,
  caption={
    Example user interaction with a~Web API,
    used as input for the RESTdesc description generation process.
  },
  label=lst:Interaction]
§\textbf{HTTP request~1: POST to /images/ with image1.jpg as body}§
§\textbf{HTTP response~1:}§
@prefix ex: <http://example.org/image#>.
@prefix dbpedia: <http://dbpedia.org/resource/>.

</images/24> a dbpedia:Image.
             ex:comments </images/24/comments>.
             ex:smallThumbnail </images/24/thumbnail>.


§\textbf{HTTP request~2: GET to /images/24/thumbnail}§
§\textbf{HTTP response~2:}§
@prefix dbpedia: <http://dbpedia.org/resource/>.
@prefix dbpedia-owl: <http://dbpedia.org/ontology/>.

</images/24> dbpedia-owl:thumbnail </images/24/thumbnail>.
</images/24/thumbnail> a dbpedia:Image;
                       dbpedia-owl:height 80.0.


§\textbf{HTTP request~3: POST to /images/ with image2.jpg as body}§
§\textbf{HTTP response~3:}§
@prefix ex: <http://example.org/image#>.
@prefix dbpedia: <http://dbpedia.org/resource/>.

</images/25> a dbpedia:Image.
             ex:comments </images/25/comments>.
             ex:smallThumbnail </images/25/thumbnail>.


§\textbf{HTTP request~4: GET to /images/25/thumbnail}§
§\textbf{HTTP response~4:}§
@prefix dbpedia: <http://dbpedia.org/resource/>.
@prefix dbpedia-owl: <http://dbpedia.org/ontology/>.

</images/25> dbpedia-owl:thumbnail </images/25/thumbnail>.
</images/25/thumbnail> a dbpedia:Image;
                       dbpedia-owl:height 80.0.
\end{lstlisting}

The idea behind the process is
to extract the hypermedia controls from a~series of requests performed by people.
We will describe the strategy using an exemplary series of interactions,
displayed in \cref{lst:Interaction}.
To create this particular series,
a~user uploaded two images and obtained their thumbnails.

First, the process needs to identify the different kinds of steps and thus resources.
This happens by performing clustering on the HTTP responses with, for instance,
a~string similarity algorithm as distance function.
In this case, responses 1 and~3 are highly similar,
and so are responses 2 and~4.
Therefore, they are assigned into two different clusters.
Note that such clustering is especially realistic
because Web APIs typically generate responses based on templates,
which contributes to a~high structural similarity
for responses of the same kind that hence follow the same template.
User interaction can influence the clustering sensitivity,
and possibly manually change assignments.
Since the two clusters, corresponding to 2 types of operations,
are correct in this example, nothing needs to change.

Next, for each cluster,
the common elements in responses and their corresponding requests are identified.
In this case, the first cluster contains two \verb!POST! requests to \verb!/images/!,
and both responses contain some resource that is an image,
and has a~link to comments and images.
This allows the algorithm to produce a~skeleton as follows:
\begin{Verbatim}
{ ?object1 a :localFile. }
=>
{
  _:request http:methodName "POST";
            http:requestURI "/images/";
            http:body ?object1;
            http:resp [ http:body _:object2 ].
  _:object2 a dbpedia:Image;
            ex:comments _:object3;
            ex:smallThumbnail _:object4.
}
\end{Verbatim}
Note how this is very close to \cref{lst:Images}.
The user can still improve this description
by specifying that \verb!?object1! and \verb!_object2! are the same entity,
and that \verb!?object1! is a~\verb!dbpedia:Image! already in the antecedent.
After these changes, and possibly renaming variables and blank nodes for legibility,
the description is the same as that of \cref{lst:Images} and thus correct.

Identifying common elements in the second cluster
leads to the following skeleton:
\begin{Verbatim}
{}
=>
{
  _:request http:methodName "GET";
            http:requestURI _:object1;
            http:resp [ http:body _:object1 ].
  _:object2 dbpedia-owl:thumbnail _:object1.
  _:object1 a dbpedia:Image;
            dbpedia-owl:height 80.0.
}.
\end{Verbatim}
The fact that the request~URI and the body refer to the same entity (\verb!_:object1!)
can be deduced from the properties of the \verb!GET! method in the HTTP specification.
These properties do not apply to \verb!POST!,
which is why the previous skeleton could not make this assumption.
The current skeleton is still incomplete, since it does not have a~precondition.
More specifically, we need to obtain the request~URI somehow.
Given the sequence of the requests in \cref{lst:Interaction},
the process can detect that the concrete instances of \verb!_:object1!
(\verb!/images/24/thumbnail! and \verb!/images/25/thumbnail!)
were already mentioned in a~previous response body.
Hence, it can place the pattern containing those instances in the antecedent
and connect the components that had identical values with the same variable names:
\begin{Verbatim}
{ ?object2 ex:smallThumbnail ?object1. }
=>
{
  _:request http:methodName "GET";
            http:requestURI ?object1;
            http:resp [ http:body ?object1 ].
  ?object2 dbpedia-owl:thumbnail ?object1.
  ?object1 a dbpedia:Image;
           dbpedia-owl:height 80.0.
}.
\end{Verbatim}
The user can now optionally rename the variable placeholders
to obtain the exact same description as in \cref{lst:Thumbnail}.

This shows how descriptions such as \cref{lst:Thumbnail,lst:Images}
can be generated.
Note that this process is not fully automated, but human-assisted.
That is, it requires a~repeated sequence of human steps as input
in order to sufficiently cluster and generalize descriptions.
Furthermore, hints and corrections from users might be necessary,
as over- or undergeneralizations will occur inevitably in some cases.
Yet such an assisted process significantly decreases the burden
of full manual description.
The fact the RESTdesc descriptions focus on REST APIs facilitates the process:
it can make additional assumptions on the behavior of the uniform interface,
and hyperlinks from previous responses can be reused.

Most important for this assisted generation of descriptions
is the close relationship between N3 and RDF.
As we have shown above, the RDF triples in the Web API's responses
serve as a~direct prototype for the consequent of the generated skeletons.
For example, the triple
\verb!</images/24>! \verb!dbpedia-owl:smallThumbnail! \verb!</images/24/thumbnail>.!
directly leads to the inclusion of
\verb!_:object2! \verb!dbpedia-owl:smallThumbnail! \verb!_:object4!
in the first skeleton.
Not only does this simplify the generation process,
the connection between the generated description
and the original response is also apparent for users,
which makes it easier for them as well.

\section{Evaluation}
\label{sec:Evaluation}
\subsection{Composition Algorithm Benchmark}
This article discusses a~proof-based method to compose and execute hypermedia APIs.
Specific to this method, in contrast to traditional Web service composition methods,
is that the composition should be regenerated at each step.
This makes the feasibility of the approach depend
on whether the composition cost is within reasonable limits.
Therefore, an evaluation should assess whether
composition happens sufficiently fast
for realistic composition lengths
and in presence of a~realistic number of possible hypermedia APIs
that can be used in compositions.

To verify this, we developed a~benchmark framework%
\footnote{Available at \url{http://github.com/RubenVerborgh/RESTdesc-Composition-Benchmark}.}
for hypermedia API composition,
consisting of two main components:
\begin{description}
\item[a~hypermedia API description generator]\hspace{-1ex},
which deterministically generates single- or multi-connected chains
of example hypermedia API descriptions with a~chosen length,
specifically tailored to enable compositions;
\item [an automated benchmarker]\hspace{-1ex},
testing how well a~reasoner performs
on creating proofs for compositions of varying lengths and complexity.
\end{description}

Below is one of the example descriptions,
generated by the tool with parameters \verb!3!~(description chain length)
and \verb!2!~(number of needed connections per description).
\begin{Verbatim}
    {
      ?a1 ex:rel2 ?b1.                 # Generated input conditions
      ?a2 ex:rel2 ?b2.
    }
    =>
    {
      _:request http:methodName "GET"; # Generated HTTP request
                http:requestURI ?a1;
                http:resp [ http:body ?b1 ].
      ?a1 ex:rel3 ?b1.                 # Generated output conditions
      ?a2 ex:rel3 ?b2.
    }.
\end{Verbatim}

The goal would be a~filter rule of the following form:
\begin{Verbatim}
    {
      ?a1 ex:rel3 ?b1.                 # Output conditions of the
      ?a2 ex:rel3 ?b2.                 # last API operation in the chain
    }
    =>
    {
      ?a1 ex:rel3 ?b1.
      ?a2 ex:rel3 ?b2.
    }.
\end{Verbatim}
Herein, the conditions in the antecedent
are only satisfiable by creating a~chain
from the first description towards the last.

As this example shows,
descriptions are structurally identical to RESTdesc descriptions of existing hypermedia APIs
and therefore representative examples.
Other descriptions will be generated
such that the input and output conditions can be matched,
so the composition algorithm can form chains
(in this example, with links to two previous descriptions).

\subsection{Parameters and Measurements}
The main parameters that determine the difficulty of generating a~composition are:
\begin{enumerate}
\item the \textbf{number} of API operations in the resulting composition:~$n$
\item the number of \textbf{dependencies} between hypermedia API operations in the composition:~$d$
\item the \textbf{total} number of hypermedia APIs supplied to the reasoner
      (not all necessarily part of the composition):~$t$
\end{enumerate}

To measure the influence of the first two parameters,
we test the generation of a~composition with a~resulting length of~$n$
and with $d$~dependencies between each hypermedia API operation,
where $n$ ranges from 2 to~1,024 and $d$ from 1 to~3.
These ranges have been chosen such that their upper bounds exceed
those of compositions for regular use cases,
which typically involve only a~few API operations
with few interdependencies.
The goal is therefore whether performance is acceptable for small values---%
success on larger values comes as an added bonus.

To test the third parameter~$t$,
we will keep $n$ and $d$ fixed at 32 and~1 respectively
(\textit{i.e.,} already a~large composition)
and add a~number of \emph{dummy} APIs~($n'=t-n$) that can be composed with the other APIs,
but are not needed in the resulting composition.
It is important to understand that most real-world scenarios will be a~mixture
of the above situations:
compositions are generally graphs with a~varying number of dependencies,
created in presence of a~non-negligible number of descriptions that are irrelevant to the composition under construction.
Therefore, by measuring these aspects independently,
we can predict the performance in those situations.

The measurements have been split in
\emph{parsing}, \emph{reasoning}, and \emph{total} times.
Parsing represents the time during which the reasoner internalizes the input
into an in-memory representation.
This was measured by presenting the inputs to the reasoner,
without asking for any operation to be performed on them.
Since the parsing step can often be cached and reused in subsequent iterations,
it is worthwhile evaluating the actual reasoning time separately.
Parsing and reasoning together make up for the total time.

\subsection{Results}

The benchmark was executed on one 2.4~GHz core of an Intel Xeon processor
on Ubuntu Server 12.04.
The results are summarized below;
full results are available
at \url{http://github.com/RubenVerborgh/RESTdesc-Composition-Benchmark-Results}.

\subsubsection{EYE reasoner}
\Cref{tbl:EYE1,tbl:EYE2} show the benchmark results achieved by the EYE reasoner~\cite{eyepaper},
version 2014-09-30 on SWI-Prolog 6.6.6.
The results in the first column teach us that
starting the reasoner introduces an overhead of $\approx 40$~ms.
This includes process starting costs, which are highly machine-dependent.

Inspecting \cref{tbl:EYE1} from left to right,
we see the reasoning time increases with the composition length~$n$
and remains limited to a~few hundred milliseconds in almost all cases.
The absolute increase in reasoning time for a~higher number of dependencies~$d$
never crosses 150~ms for small to medium values of~$n$,
but becomes larger for high~$n$.
\cref{tbl:EYE2} shows that the reasoning time hardly increases in presence of dummies.

\vspace{-1em}

\subsubsection{\cwm reasoner}
The same experiments have been performed with the \cwm reasoner~\cite{cwm},
whose results are shown in \cref{tbl:cwm1,tbl:cwm2}.
The \cwm reasoner is not as strongly performance-optimized as EYE,
which is clearly visible in the results.
Also, we were only able to test for values of~$n$ up to~256,
because out-of-memory errors appeared for large values.
Despite this fact, we still see acceptable results for small-to-medium-sized compositions.
We note a~higher start-up time of $\approx$~140~ms.
Reasoning time increases faster than linearly in~$n$,
which is also the case for increasing~$d$.
The presence of dummies bothers \cwm more than EYE,
and serious issues start to appear at $n'=$~512.

\begin{table}[t]
  \caption{The EYE reasoner manages to create even lengthy compositions in a~timely manner (\emph{average times of 50~trials}).}
  \label{tbl:EYE1}
  \fontsize{8pt}{9pt}\selectfont
  \begin{tabular*}{\textwidth}{@{\extracolsep{\fill}} r r r r r r r r r r}
    \hline\hline
    \hspace{5.2ex}\bf \#APIs~$n$ & \bf 4 & \bf 8 & \bf 16 & \bf 32 & \bf 64 & \bf 128 & \bf 256 & \bf 512 & \bf 1,024 \\
    \hline
    \bf{$d$=$1$~dep.}&&&&&&&&&\\
    parsing & 39~ms & 41~ms & 45~ms & 55~ms & 74~ms & 114~ms & 183~ms & 323~ms & 597~ms\\
    \em reasoning &  \em 16~ms &  \em 19~ms &  \em 33~ms &  \em 21~ms &  \em 43~ms &  \em 48~ms &  \em 108~ms &  \em 268~ms &  \em 903~ms\\
    total & 56~ms & 61~ms & 79~ms & 76~ms & 118~ms & 162~ms & 292~ms & 591~ms & 1,500~ms\\
    \hline
    \bf{$d$=$2$~dep.}&&&&&&&&&\\
    parsing & 39~ms & 44~ms & 59~ms & 61~ms & 90~ms & 126~ms & 212~ms & 383~ms & 728~ms\\
    \em reasoning &  \em 22~ms &  \em 60~ms &  \em 22~ms &  \em 56~ms &  \em 45~ms &  \em 113~ms &  \em 311~ms &  \em 1,044~ms &  \em 3,640~ms\\
    total & 61~ms & 105~ms & 82~ms & 117~ms & 136~ms & 239~ms & 524~ms & 1,427~ms & 4,369~ms\\
    \hline
    \bf{$d$=$3$~dep.}&&&&&&&&&\\
    parsing & 41~ms & 50~ms & 53~ms & 73~ms & 97~ms & 146~ms & 250~ms & 460~ms & 890~ms\\
    \em reasoning &  \em 29~ms &  \em 18~ms &  \em 39~ms &  \em 48~ms &  \em 105~ms &  \em 263~ms &  \em 787~ms &  \em 2,660~ms &  \em 9,105~ms\\
    total & 70~ms & 68~ms & 93~ms & 122~ms & 203~ms & 409~ms & 1,037~ms & 3,121~ms & 9,995~ms\\
    \hline\hline
  \end{tabular*}
\end{table}
\begin{table}[t]
  \caption{Even a~large number of dummies does not significantly disturb
  EYE's reasoning time (\emph{average times of 50~trials}).}
  \label{tbl:EYE2}
  \fontsize{8pt}{9pt}\selectfont
  \begin{tabular*}{\textwidth}{@{\extracolsep{\fill}} r r r r r r r r}
    \hline\hline
    \bf \# dummies~$n'$ & \bf 2,048 & \bf 4,096 & \bf 8,192 & \bf 16,384 & \bf 32,768 & \bf 65,536 & \bf 131,072\\
    \hline
    \bf{$n$=32, $d$=1~dep.}&&&&&&&\\
    parsing & 1,221~ms & 2,405~ms & 4,586~ms & 8,959~ms & 17,700~ms & 35,823~ms & 74,394~ms\\
    \em reasoning & \em 47~ms & \em 66~ms & \em 71~ms & \em 177~ms & \em 403~ms & \em 633~ms & \em 1,042~ms\\
    total & 1,268~ms & 2,471~ms & 4,657~ms & 9,136~ms & 18,104~ms & 36,456~ms & 75,437~ms\\
    \hline\hline
  \end{tabular*}
\end{table}

\begin{table}[t]
  \caption{The \cwm reasoner performs worse than EYE,
  but still creates smaller compositions acceptably fast (\emph{average times of 50~trials}).}
  \label{tbl:cwm1}
  \fontsize{8pt}{9pt}\selectfont
  \begin{tabular*}{\textwidth}{@{\extracolsep{\fill}} r r r r r r r r}
    \hline\hline
    \hspace{5.2ex}\bf \#APIs~$n$ & \bf 4 & \bf 8 & \bf 16 & \bf 32 & \bf 64 & \bf 128 & \bf 256\\
    \hline
    \bf{$d$=$1$~dep.}&&&&&&&\\
    parsing & 139~ms & 155~ms & 186~ms & 271~ms & 566~ms & 1,721~ms & 6,119~ms\\
    \em reasoning &  \em 43~ms &  \em 106~ms &  \em 320~ms &  \em 1,149~ms &  \em 4,456~ms &  \em 18,266~ms &  \em 82,856~ms\\
    total & 183~ms & 262~ms & 506~ms & 1,421~ms & 5,022~ms & 19,987~ms & 88,975~ms\\
    \hline
    \bf{$d$=$2$~dep.}&&&&&&&\\
    parsing & 143~ms & 161~ms & 201~ms & 300~ms & 637~ms & 1,904~ms & 6,509~ms\\
    \em reasoning &  \em 119~ms &  \em 473~ms &  \em 1,946~ms &  \em 8,016~ms &  \em 31,189~ms &  \em 136,806~ms &  \em 582,810~ms\\
    total & 263~ms & 634~ms & 2,147~ms & 8,317~ms & 31,827~ms & 138,710~ms & 589,319~ms\\
    \hline
    \bf{$d$=$3$~dep.}&&&&&&&\\
    parsing & 147~ms & 168~ms & 216~ms & 328~ms & 701~ms & 2,088~ms & 7,396~ms\\
    \em reasoning &  \em 190~ms &  \em 739~ms &  \em 2,843~ms &  \em 10,708~ms &  \em 43,429~ms &  \em 182,949~ms &  \em 776,569~ms\\
    total & 337~ms & 908~ms & 3,059~ms & 11,036~ms & 44,130~ms & 185,037~ms & 783,965~ms\\
    \hline\hline
  \end{tabular*}
\end{table}

\begin{table*}[t]
  \caption{The \cwm reasoner does not perform well when the number of dummies increases
           (\emph{average times of 50~trials}).}
  \label{tbl:cwm2}
  \fontsize{8pt}{9pt}\selectfont
  \begin{tabular*}{\textwidth}{@{\extracolsep{\fill}} r r r r r r r}
    \hline\hline
    \bf \#dummies~$n'$ & \bf 16 & \bf 32 & \bf 64 & \bf 128 & \bf 256 & \bf 512\\
    \hline
    \bf{$n$=32, $d$=1~dep.}&&&&&\\
    parsing & 406~ms & 576~ms & 1,057~ms & 2,548~ms & 7,682~ms & 26,678~ms\\
    \em reasoning &  \em 1,107~ms &  \em 1,083~ms &  \em 1,116~ms &  \em 2,173~ms &  \em 12,908~ms &  \em 110,775~ms\\
    total & 1,513~ms & 1,659~ms & 2,173~ms & 4,721~ms & 20,590~ms & 137,453~ms\\
    \hline\hline
  \end{tabular*}
\end{table*}

\subsubsection{Analysis}
The main cause of the difference in performance between EYE and \cwm
are due to the different reasoning mechanisms.
EYE~is a~backward-chaining reasoner,
which starts from the goal and works towards the initial state,
whereas \cwm is forward-chaining,
exploring inferences from the initial state onwards
until the goal has been reached.
This explorative behavior demands more processor time,
since all possible paths have to be tried,
even those that do not contribute to the composition.
This is most apparent in the experiment with dummies:
\cwm tries to use them and eventually finds they are not necessary;
EYE will only parse them but never tries to use them in the composition.

\vspace{-1em}
\subsubsection{Discussion}
The main question of these experiments
is whether the results are generalizable to real-world hypermedia API compositions.
On the one hand,
we have to investigate the difference between the generated API descriptions
and actual hypermedia API descriptions.
On the other hand,
we have to verify if the resulting reasoning times are acceptable
for realistic compositions in a~Web-scale environment.

First, the generated descriptions have been tailored to closely mimic actual \restdesc descriptions.
The following characteristics of actual descriptions are also found in the generated ones.
They contain a~number of pre-conditions, determined by the parameter~$d$,
and an equal number of post-conditions.
They describe the \http request that has to be performed
to execute a~hypermedia API operation.
The parameters of the request are obtained through placeholders from the pre-conditions,
so they have to be instantiated.
\\
In contrast, some characteristics are different.
All requests are~\Verb!GET! requests,
whereas real-world APIs also use other \http verbs.
However, this has no impact on the reasoner.
Furthermore, all descriptions employ predicates with a~shared \uri namespace.
This is done to ensure that a~composition always exists.
However, this too has no impact on reasoning or parsing time.
Therefore, the generated descriptions simulate real-world descriptions reliably.

Second, we note that only the reasoning times are important,
because the parsing results can be cached.
The maximum tested composition length~$n$ is large compared to
what one could expect from realistic compositions.
In practice, compositions of only a~few API operations will be necessary,
yet both reasoners perform acceptably on small to medium composition sizes.
Furthermore, EYE is capable of creating compositions of a~few hundred API operations
in just a~few hundred milliseconds.
Also, the number of dependencies~$d$ of each API will likely be limited,
with most calls only depending on a~single other operation.
Yet even if this is not the case,
EYE can fluently cope with multiple dependencies.
The final parameter we have to check is the total number of APIs~$t$,
since reasoners should be able to create compositions out of large API repositories.
Given that ProgrammableWeb contains 10,000~APIs~\cite{ProgrammableWeb},
the fact that EYE merely needs $\approx$~230~ms
to create a~composition in presence of more than 130,000~dummy APIs,
indicates that proof-based composition is a~viable strategy for the years to come.

\section{Conclusion}
\label{sec:Conclusion}
In this article,
we explained a~novel solution to automated composition and execution of hypermedia APIs.
A~crucial part in generating a~composition
is the ability to determine whether it will satisfy a~given goal
without any undesired effects.
This has led us to the approach of a~pragmatic proof,
wherein hypermedia API operations are incorporated as inference rules.
We distinguish between a~pre-execution proof and a~post-execution proof,
where the former has the additional assumption that all hypermedia API operations will succeed,
hence the ``pragmatic'' label of the method.

We selected an RDF-based method and logic for this task,
in order to bridge between existing Web technologies
and concepts from logic programming.
A~benefit of proof-based composition is that it does not require new algorithms and tools,
but can be applied with existing Semantic Web reasoners.
Those reasoners can easily incorporate external sources of knowledge
such as ontologies or business rules.
Furthermore, the performance of composition generation improves
with the evolution of those reasoners.
Also, the fact that a~third-party tool is used allows independent validation of the composition.

Our approach is a~special use case for proofs,
which have traditionally been regarded as a~part of trust on the Semantic Web.
While pre-proofs partly contribute to this,
they also have the added functionality of generating a~composition during that process.
It will be interesting to explore other opportunities
to exploit the power of proof creation
and the mechanisms behind it.
This application can serve as an example of how to apply such ideas.

In the past, we have already employed the method
in the domain of sensor APIs~\cite{verborgh_ssn_2012},
yet we want to extend the approach to other domains such as multimedia analysis and transcoding~\cite{verborgh_mtap_2013,vanlancker_mtap_2013}.
In the longterm, we aim at offering the composition method described in this article
as a~hypermedia API itself,
so it can be used for dynamic mash-up and composition~generation.

Another interesting path is to explore the limits of the used logic.
For instance, it would currently be impossible to express the deletion of resources,
even though this is a~common operation on the Web
and even has a~designated HTTP method \verb!DELETE!.
We are currently experimenting with capturing explicitly described states
inside RESTdesc descriptions to account for these situations.


A~crucial part of the proof-based method is that the interaction remains driven by hypermedia.
In contrast to traditional approaches,
where a~plan determines the full interaction,
the composition here serves as a~guideline to complete the interaction.
Until the moment machines are able to autonomously interpret
the meaning of following a~hyperlink---%
like we humans can---%
guiding them through a~hypermedia application with descriptions and proofs
can be the~pragmatic~alternative.

\subsection*{Acknowledgments}
{
  \small
  Ruben Verborgh is a~Postdoctoral Fellow of the Research Foundation Flanders.
  The research activities were funded by Ghent University,
  the Institute for the Promotion of Innovation by Science and Technology in Flanders~(IWT),
  and the European~Union.
}

\bibliographystyle{acmtrans}
\bibliography{pragmatic-proof}

\end{document}